\newtheorem{theorem}{Theorem}
\newtheorem{corollary}{Corollary}
\newtheorem{definition}{Definition}
\newcommand{\blackslug}{\mbox{\hskip 1pt \vrule width 4pt height 8pt 
depth 1.5pt \hskip 1pt}}
\newcommand{\qed}{\quad\blackslug\lower 8.5pt\null\par\noindent}
\newenvironment{proof}{\par\noindent{\bf Proof:}}{\qed \par}
\newcommand{\bQ}{\mbox{${\bf Q}$}}
\newcommand{\bH}{\mbox{${\bf H}$}}
\title{Some properties of $n$-party entanglement under LOCC operations
}
\author{Daniel Lehmann\\School of Engineering
and \\ Center for Language, Logic and Cognition
\\Hebrew University, \\Jerusalem 91904, Israel 
}
\date{December 2011}
\begin{document}
\maketitle
\begin{abstract}
Nielsen~\cite{Nielsen:LOCC} characterized in full 
those $2$-party quantum protocols of
local operations and classical communication that transform, with probability one, 
a pure global initial state into a pure global final state. 
The present work considers the generalization
of Nielsen's characterization to $n$-party protocols.
It presents a sweeping generalization of the {\em only if} part of Nielsen's result.
The result presented here pertains also to protocols 
that do not generate a final state for sure,
it considers arbitrary mixed initial states instead of pure states 
and $n$-party protocols for arbitrary $n$'s.
In this very general setting, local operations and classical communication
can never decrease the expected spectra of the local mixed states 
in the majorization ordering. 
In other terms, the local states can only become purer (weakly) in expectation. 
The proof also provides an improvement on Nielsen's.
The {\em if} part of Nielsen's characterization does not generalize. 
This is shown by studying the entanglement of three qubits. 
It is shown that one can find pure states of a system of three 
qubits that are not equivalent under unitary local operations 
but define local mixed states
on all subparts of the system that have the same spectra. 
Neither equivalence of pure states under local unitary operations 
or accessibility under LOCC operations among a system of three
qubits can be characterized by properties of the spectra of the local mixed states. 
\end{abstract}

\section{Introduction and previous work} \label{sec:intro}
We assume each of $n$ parties, i.e., agents, has some piece of a quantic system. 
The pieces do not have to be similar.
Let \mbox{$\bH = \bH_{1} \otimes \bH_{2} \otimes \ldots \otimes \bH_{n}$} 
be a tensor product of $n$ finite-dimensional Hilbert spaces. 
We consider that the global system represented by $\bH$ 
is made of $n$ different parts, 
represented by $\bH_{i}$, for \mbox{$i = 1 , \ldots , n$}, 
the $i$'s part being controlled by agent $i$. 
In accordance with tradition, we assume agent $1$ is Alice.
The different agents may be far away from one another. 
From agent $i$'s point of view, the system it controls is represented 
by a mixed state of $\bH_{i}$
that depends on the global state of the system. 
Different actions of the agents can modify the
global state and therefore the local states. 
This phenomenon is used to realize quantum protocols 
that may achieve a kind of cooperation
between the different agents that is not attainable 
by classical means, as first showed 
by J. S. Bell in~\cite{Bell:64} and described in~\cite{Preskill:notes}, Chapter 4.

We focus here on the case the possible actions of the agents are 
classical communication, local unitary operations 
and local generalized measurements. 
Any protocol of such actions transforms, probabilistically, the initial (mixed) state 
into a final (mixed) state. 
A particularly interesting case of such protocols is that of protocols
in which one final state (pure or mixed) is obtained with probability one.

In~\cite{Nielsen:LOCC}, M. Nielsen characterized 
the final states obtainable from a given
initial {\em pure} state with probability one by local unitary transformations, 
local generalized measurements and classical communication in the $2$-party case 
(\mbox{$n = 2$}). 
He showed that an initial state $\mid \phi \rangle$ can be transformed into 
a final state $\mid \psi \rangle$ in such a way 
iff the spectrum of the local mixed state of Alice 
induced by $\mid \psi \rangle$ majorizes the one induced by $\mid \phi \rangle$.

The $2$-party case has a distinguishing property: 
the spectra of the local mixed states of the two
agents are closely related: they are the same up to, perhaps, some zeros, 
or, in other terms, their strictly positive parts are the same.
This paper generalizes the {\em only if} part of Nielsen's result 
to an $n$-party situation and to initial states that are mixed 
in Section~\ref{sec:Nielsen}. 
Section~\ref{sec:3qu} and later are devoted to showing that the generalization of 
the {\em if} part of Nielsen's result fails quite spectacularly.
Even in an entangled system of three qubits one cannot decide the equivalence 
of pure states under local unitary operations 
by examining only the spectra of the mixed 
states of the different parts of the system.

\section{Positive results} \label{sec:positive}
In this section a sweeping generalization of Nielsen's~\cite{Nielsen:LOCC} result is proposed.
\begin{itemize}
\item Instead of considering only protocols that end up in a final global state for sure, this paper
considers any LOCC protocol and the probability distribution on final global states 
that it generates.
The notion of the expected spectrum of local states is the main tool that will enable us to study
such protocols in general.
\item Instead of considering only pure states as initial and final global states, 
this paper considers probability distributions over mixed states.
\item Instead of considering $2$-party entanglement, this paper considers arbitrary $n$-party
entanglement.
\end{itemize}
\subsection{Local states defined by a global state and their properties} 
\label{sec:mixed-local}
\subsubsection{Local states} \label{sec:local}
Some notation will be useful. 
We shall always use $i$ to represent one of the $n$ agents.
We shall use $i^{-}$ to represent the set of $n - 1$ agents 
that contains all agents except agent $i$. 
In the same spirit, $\bH_{i^{-}}$ represents the tensor product 
of all $\bH_{j}$s, except $\bH_{i}$. 
In other terms \mbox{$\bH_{i^{-}} =$} 
\mbox{$\bH_{1} \otimes \ldots \otimes \bH_{i - 1} \otimes \bH_{i + 1} \otimes \ldots \otimes \bH_{n}$}.
Note that, for every $i$, we have \mbox{$\bH =$} \mbox{$\bH_{i} \otimes \bH_{i^{-}}$}.

Let us assume that the global state is some mixed state, i.e., a linear, self-adjoint, 
weakly positive operator \mbox{$\sigma : \bH \rightarrow \bH$} of trace $1$.
Agent $i$, who sees only the $\bH_{i}$ part of the system, 
sees his system as a mixed state
\mbox{$\rho_{i}^{\sigma} : \bH_{i} \rightarrow \bH_{i}$} defined as:
\begin{equation} \label{eq:mixed-state}
\rho_{i}^{\sigma} \ = \ {Tr}_{\bH_{i^{-}}} (\sigma).
\end{equation}

\subsubsection{Properties of the partial trace operator}
\label{sec:partial-trace}
The partial trace operator satisfies the following properties, for any linear operator
\mbox{$f : A \otimes B \rightarrow A \otimes B$}:
\begin{enumerate}
\item for any basis \mbox{$b_{i}$}, \mbox{$i = 1 , \ldots , n$} of $B$
and for any vectors \mbox{$x , y \in A$},
\begin{equation} \label{eq:basis}
\langle x \mid Tr_{B} ( f ) \mid y \rangle \ = \  
\sum_{i = 1}^{n} \langle x \otimes b_{i} \mid f \mid y \otimes b_{i} \rangle.
\end{equation}
\item the partial trace of the adjoint of an operator is the adjoint of the partial trace:
\begin{equation} \label{eq:trace*}
Tr_{B} ({f}^{\ast}) \ = \ {(Tr_{B}(f))}^{\ast}
\end{equation}
and therefore the partial trace of a self-adjoint operator is self-adjoint,
\item the partial trace of a weakly positive operator is weakly positive,
\item the trace of a partial trace is the trace of the original operator
\begin{equation} \label{eq:trace1}
Tr(Tr_{B}(f)) \ = \ Tr(f)
\end{equation}
\item the partial trace of the identity is multiplication by the dimension of the space, $B$,
on which the partial trace is taken:
\begin{equation} \label{eq:identity}
Tr_{B}( {id}_{A \otimes B} ) \ = \  dim(B) \, {id}_{A},
\end{equation}
\item \label{com} for any linear operator
\mbox{$g : B \rightarrow B$}
\begin{equation} \label{eq:commutation}
Tr_{B} ( f \circ ({id}_{A} \otimes g) ) \ = \  
Tr_{B} (({id}_{A} \otimes g) \circ f),
\end{equation} 
\item for any linear operator
\mbox{$g : A \rightarrow A$}
\begin{equation} \label{eq:self-trace1}
Tr_{B} ( f \circ (g \otimes {id}_{B}) ) \ = \  
Tr_{B} (f) \circ g, 
\end{equation}
and 
\begin{equation} \label{eq:self-trace2}
Tr_{B} ( (g \otimes {id}_{B}) \circ f ) \ = \  
g \circ Tr_{B} (f).
\end{equation}
\end{enumerate}

The case $f$ is the projection $P_{\phi}$ of $A \otimes B$ on a unit vector 
\mbox{$\mid \phi \rangle \in A \otimes B$} is an important special case.
In this case, the global state is a {\em pure} state, i.e., a one-dimensional subspace 
of $A \otimes B$. 

A consequence of the properties of partial trace described above that will be used
in Theorems~\ref{the:unitary-other}, \ref{the:unitary-self} and~\ref{the:equivalence}
will now be proved.
\begin{theorem} \label{the:aux}
Let $\sigma$ be a mixed state of $A \otimes B$ and let
\mbox{$u_{A} : A \rightarrow A$} and \mbox{$u_{B} : B \rightarrow B$} 
be unitary maps. Then,
\[
Tr_{B} ( (u_{A} \otimes u_{B} ) \circ \sigma \circ ( u_{A}^{\ast} \otimes u_{B}^{\ast}) ) 
\ = \ 
u_{A} \circ Tr_{B}( \sigma ) \circ u_{A}^{\ast}.
\]
\end{theorem}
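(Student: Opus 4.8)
The plan is to reduce the two-sided conjugation by $u_{A} \otimes u_{B}$ to separate manipulations on the $A$-factor and on the $B$-factor, exploiting the factorization $u_{A} \otimes u_{B} = (u_{A} \otimes {id}_{B}) \circ ({id}_{A} \otimes u_{B})$ together with the matching factorization $u_{A}^{\ast} \otimes u_{B}^{\ast} = ({id}_{A} \otimes u_{B}^{\ast}) \circ (u_{A}^{\ast} \otimes {id}_{B})$ of the adjoint. First I would rewrite the argument of the partial trace as
\[
(u_{A} \otimes {id}_{B}) \circ \bigl[ ({id}_{A} \otimes u_{B}) \circ \sigma \circ ({id}_{A} \otimes u_{B}^{\ast}) \bigr] \circ (u_{A}^{\ast} \otimes {id}_{B}),
\]
which is legitimate because tensor products multiply factorwise, so these bracketings recombine to $(u_{A} \otimes u_{B}) \circ \sigma \circ (u_{A}^{\ast} \otimes u_{B}^{\ast})$.

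Next I would peel off the outer $A$-factors using the self-trace identities. Applying equation~(\ref{eq:self-trace2}) with $g = u_{A}$ pulls $u_{A}$ out on the left, and applying equation~(\ref{eq:self-trace1}) with $g = u_{A}^{\ast}$ pulls $u_{A}^{\ast}$ out on the right, so that the left-hand side becomes
\[
u_{A} \circ Tr_{B}\bigl[ ({id}_{A} \otimes u_{B}) \circ \sigma \circ ({id}_{A} \otimes u_{B}^{\ast}) \bigr] \circ u_{A}^{\ast}.
\]
It then remains to prove that the inner partial trace equals $Tr_{B}(\sigma)$, i.e., that conjugating by a unitary acting only on the traced-out space $B$ leaves the partial trace unchanged.

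For this last step I would invoke the commutation property, equation~(\ref{eq:commutation}). Taking $f = ({id}_{A} \otimes u_{B}) \circ \sigma$ and $g = u_{B}^{\ast}$, the identity moves the factor $({id}_{A} \otimes u_{B}^{\ast})$ from the right of $f$ to its left, where it meets $({id}_{A} \otimes u_{B})$. Since $({id}_{A} \otimes u_{B}^{\ast}) \circ ({id}_{A} \otimes u_{B}) = {id}_{A} \otimes (u_{B}^{\ast} \circ u_{B}) = {id}_{A \otimes B}$ by unitarity of $u_{B}$, the inner trace collapses to $Tr_{B}(\sigma)$, which completes the proof.

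I do not expect a genuine obstacle here: the statement is a purely structural consequence of the partial-trace properties listed above, and the only care required is bookkeeping of the order of composition and the correct choice of $g$ at each invocation. The one point worth emphasizing is the commutation step, since it is exactly what formalizes the fact that the local state --- and hence its spectrum --- seen by the agent holding $A$ is insensitive to any local unitary performed on the complementary system $B$; this insensitivity is what makes the theorem the right lemma to feed into Theorems~\ref{the:unitary-other}, \ref{the:unitary-self} and~\ref{the:equivalence}.
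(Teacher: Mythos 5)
Your proof is correct and follows essentially the same route as the paper's: both split $u_{A}\otimes u_{B}$ into factorwise pieces, use the commutation property~(\ref{eq:commutation}) to cancel the $B$-factors by unitarity, and use the self-trace identities~(\ref{eq:self-trace1}) and~(\ref{eq:self-trace2}) to pull the $A$-factors outside the partial trace. The only difference is the order (you peel off the $A$-factors first and then cancel the $B$-factors, while the paper does the reverse), which is immaterial.
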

\begin{proof}
\[
Tr_{B} ( (u_{A} \otimes u_{B} ) \circ \sigma \circ ( u_{A}^{\ast} \otimes u_{B}^{\ast}) ) 
\ = \ 
Tr_{B} ( ( u_{A} \otimes u_{B} ) \circ \sigma \circ ( u_{A}^{\ast} \otimes {id}_{B} ) \circ 
( {id}_{A} \otimes u_{B}^{\ast}) ) \ = \ 
\]
by Equation~(\ref{eq:commutation})
\[
Tr_{B} ( ( {id}_{A} \otimes u_{B}^{\ast} ) \circ ( u_{A} \otimes u_{B} ) \circ \sigma \circ 
( u_{A}^{\ast} \otimes {id}_{B} ) ) \ = \ 
Tr_{B} ( ( u_{A} \otimes {id}_{B} ) \circ \sigma \circ 
( u_{A}^{\ast} \otimes {id}_{B} ) ) \ = \ 
\]
by Equations~(\ref{eq:self-trace1}) and~(\ref{eq:self-trace2})
\[
u_{A} \circ Tr_{B}( \sigma ) \circ u_{A}^{\ast}.
\]
\end{proof}

The partial trace operator also satisfies the following.
For any linear operator
\mbox{$f : A \otimes B \otimes C \rightarrow A \otimes B \otimes C$}:
\begin{equation} \label{eq:three}
Tr_{B} ( Tr_{C} ( f ) ) \ = \  Tr_{B \otimes C} ( f ).
\end{equation}

In general, \mbox{$Tr_{B} ( g \circ f ) \neq Tr_{B} ( g ) \circ Tr_{B} ( f )$}.

\subsubsection{Operating on mixed states}
\label{sec:operating}
In Section~\ref{sec:LOCC} we shall discuss some operations 
agents can perform that transform the global state of a system.
We shall now reflect on how transformations of the global state 
should be modeled.
Typically such a transformation is modeled by some linear operator 
\mbox{$ f : \bH \rightarrow \bH$}: any unit vector $x$ of $\bH$ is transformed by, first, 
applying $f$ to it and, then, renormalizing.
If the global state is a mixed state, then, the transformation corresponding to
$f$ is the transformation that transforms the projection on $x$:
\mbox{$P_{x} =$} \mbox{$\mid x \rangle \langle x \mid$} 
into the projection on $f ( x )$:
\mbox{$P_{f (x )} =$}
\mbox{$\mid f x \rangle \langle f x \mid =$} 
\mbox{$ f \mid x \rangle \langle x \mid {f}^{\ast} =$}
\mbox{$f \circ P_{x} \circ {f}^{\ast}$}.
The transform (by $f$) of a mixed state $\rho$ is therefore 
\mbox{$f \circ \rho \circ {f}^{\ast}$} after renormalization:
\begin{equation} \label{eq:rho-prime}
\rho' \ = {{f \circ \rho \circ {f}^{\ast}} \over {Tr(f \circ \rho \circ {f}^{\ast})}}
\end{equation}

One easily sees that, if \mbox{$Tr ( f \circ \rho \circ {f}^{\ast} ) \neq 0$}, $\rho'$ is indeed a self-adjoint, weakly positive operator of trace $1$.
If \mbox{$Tr ( f \circ \rho \circ {f}^{\ast} ) =$} $0$, the state $\rho$ cannot be transformed by $f$.

\subsection{Spectrum and majorization} \label{sec:spectrum}
The spectrum, i.e., the set of eigenvalues (with their multiplicity) of mixed local states 
and the majorization relation between such spectra will prove to be of cardinal importance
in our study of transformations of local states brought about 
by the different agents activities.
Since mixed states are self-adjoint, weakly positive operators of trace one, their spectrum
is composed of real nonnegative numbers whose sum is equal to one. 
{\bf We shall always order such spectra in {\em decreasing} order}.
If \mbox{$Sp(\rho) =$} \mbox{$( \lambda_{1} , \ldots , \lambda_{n} )$}, 
$\lambda_{1}$ is the largest eigenvalue of $\rho$ and $\lambda_{n}$ the smallest.
This enables us to define sums and convex combinations of spectra:
the first element of \mbox{$1 / 2 \: Sp(\rho) \, + \,1 / 2 \: Sp(\sigma)$}, for example, 
is the average of the largest eigenvalues of $\rho$ and $\sigma$.
When we add spectra of different lengths, or compare them as below, 
we always padd the shorter spectrum with zeros on the right.
\begin{definition} \label{def:major}
Let \mbox{$ \lambda =$}
\mbox{$( \lambda_{1} , \ldots , \lambda_{n} )$}, and \mbox{$ \mu =$}
\mbox{$( \mu_{1} , \ldots , \mu_{n} ) $} be spectra (suitably padded).
We say that $\lambda$ {\em majorizes} $\mu$ and write \mbox{$\lambda\succeq \mu$}
iff for every \mbox{$k = 1 , \ldots , n$}, one has:
\begin{equation} \label{eq:major}
\sum_{j = 1}^{k} \lambda_{j} \geq \sum_{j = 1}^{k} \mu_{j}.
\end{equation} 
Let \mbox{$\rho : A \rightarrow A$} and \mbox{$\sigma : B \rightarrow B$} 
be any two mixed states, we say
that $\rho$ {\em majorizes} $\sigma$ and write \mbox{$\rho \succeq \sigma$}
iff \mbox{$Sp(\rho) \succeq Sp(\sigma)$}.
\end{definition}

Note that a pure state majorizes any state. The majorization relation is a pre-order:
reflexive and transitive. Two mixed states are equivalent in the majorization order iff
they have the same spectrum.
The reader should think of Definition~\ref{def:major} in the context of local mixed states
$\rho$ and $\sigma$. 
We now want to define a relation on global states.

\begin{definition} \label{def:global-major}
Let \mbox{$\rho , \sigma : \bH \rightarrow \bH$} be global mixed states, where
\mbox{$\bH =$} \mbox{$\bH_{1} \otimes \ldots \otimes \bH_{n}$}.
We shall say that $\rho$ is {\em stronger} than $\sigma$ iff, 
for every \mbox{$i = 1 , \ldots , n$}, one has:
\begin{equation} \label{eq:purer}
Tr_{\bH_{i^{-}}} ( \rho ) \succeq Tr_{\bH_{i^{-}}} ( \sigma ).
\end{equation}
\end{definition}

The tool we shall use in Section~\ref{sec:local-gen} to prove majorization
properties is Corollary~\ref{co:sum} below. Its proof is based on 
Theorem~\ref{the:Fan} found in Appendix~\ref{sec:Fan} with a proof.

The following is found in~\cite{Marshall-Olkin:Inequalities} p. 241.
\begin{corollary} \label{co:sum}
For any self-adjoint matrices $A$ and $B$, \mbox{$A , B : \bH \rightarrow \bH$}, one has
\mbox{$Sp(A) + Sp(B) \succeq$} \mbox{$Sp ( A + B )$}.
\end{corollary}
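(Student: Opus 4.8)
The plan is to obtain the corollary as an immediate consequence of Ky Fan's maximum principle, which is exactly the content of Theorem~\ref{the:Fan}. Since $A$, $B$ and $A + B$ are all self-adjoint operators on the same space $\bH$ of dimension $n$, their spectra all have length $n$ and no padding with zeros is needed. Write $Sp(A) = (\lambda_1, \ldots, \lambda_n)$, $Sp(B) = (\mu_1, \ldots, \mu_n)$ and $Sp(A + B) = (\nu_1, \ldots, \nu_n)$, all in decreasing order as stipulated. Unwinding Definition~\ref{def:major}, the claim $Sp(A) + Sp(B) \succeq Sp(A + B)$ is precisely the family of inequalities $\sum_{j=1}^{k} \nu_j \leq \sum_{j=1}^{k} (\lambda_j + \mu_j)$ for every $k = 1, \ldots, n$, which I would establish one $k$ at a time.

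Fix $k$. The key tool is the variational characterization supplied by Theorem~\ref{the:Fan}: for any self-adjoint $M$ with decreasingly ordered eigenvalues $\xi_1 \geq \cdots \geq \xi_n$, the partial sum $\sum_{j=1}^{k} \xi_j$ equals the maximum of $\sum_{j=1}^{k} \langle v_j \mid M \mid v_j \rangle$ taken over all orthonormal families $v_1, \ldots, v_k$ in $\bH$. First I would apply this to $M = A + B$ and let $w_1, \ldots, w_k$ be an orthonormal family attaining the maximum, so that $\sum_{j=1}^{k} \nu_j = \sum_{j=1}^{k} \langle w_j \mid (A + B) \mid w_j \rangle$. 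By linearity of the inner product in the operator argument this splits as $\sum_{j=1}^{k} \langle w_j \mid A \mid w_j \rangle + \sum_{j=1}^{k} \langle w_j \mid B \mid w_j \rangle$.

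Now I would apply Theorem~\ref{the:Fan} a second and a third time, but in the opposite direction: since the top-$k$ eigenvalue sum is the \emph{maximum} over all orthonormal $k$-frames, the particular frame $w_1, \ldots, w_k$ can only give a value no larger than that maximum, for each of $A$ and $B$ separately. Hence $\sum_{j=1}^{k} \langle w_j \mid A \mid w_j \rangle \leq \sum_{j=1}^{k} \lambda_j$ and $\sum_{j=1}^{k} \langle w_j \mid B \mid w_j \rangle \leq \sum_{j=1}^{k} \mu_j$. Adding these two bounds yields $\sum_{j=1}^{k} \nu_j \leq \sum_{j=1}^{k} \lambda_j + \sum_{j=1}^{k} \mu_j$, the desired inequality for this $k$. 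As $k$ was arbitrary, the majorization follows; and for $k = n$ both sides equal $Tr(A) + Tr(B)$, so the last partial sums coincide, as majorization between full spectra requires.

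The proof is short once Theorem~\ref{the:Fan} is available, so there is no real obstacle at this level; the whole weight of the argument rests on Ky Fan's principle, whose proof is deferred to Appendix~\ref{sec:Fan}. The one point that needs care is the bookkeeping of the two directions in which the variational formula is used: it is an \emph{equality} (attained at $w_1, \ldots, w_k$) for the combined operator $A + B$, but only an \emph{upper bound} for the summands $A$ and $B$ evaluated on that same, generally non-optimal, frame. Getting this asymmetry right is exactly what makes the inequality point the correct way.
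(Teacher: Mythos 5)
Your proposal is correct and follows essentially the same route as the paper: both invoke Ky Fan's maximum principle (Theorem~\ref{the:Fan}), take a $k$-basis attaining the maximum for $A+B$, split $w_{A+B}(b)=w_{A}(b)+w_{B}(b)$ by linearity, and bound each summand by the corresponding top-$k$ eigenvalue sum. The only difference is expository — you spell out the asymmetry between the equality for $A+B$ and the upper bounds for $A$ and $B$, which the paper leaves implicit.
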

\begin{proof}
Let \mbox{$Sp(A + B) =$} \mbox{$\{ \nu_{i} \}_{i = 1 , \ldots , n}$}.
For any $q$, \mbox{$1 \leq q \leq n$}, by Theorem~\ref{the:Fan}, the sum
\mbox{$\sum_{i = 1}^{q} \nu_{i}$} is the maximum value taken by $w_{A + B}(b)$ 
on all $q$-bases for $\bH$. But \mbox{$w_{A + B}(b) =$} \mbox{$w_{A}(b) + w_{B}(b)$},
\mbox{$w_{A}(b) \leq \sum_{i = 1}^{q} \lambda_{i}$} and 
\mbox{$w_{B}(b) \leq \sum_{i = 1}^{q} \mu_{i}$}. 
\end{proof}

\subsection{Local operations and classical communication} \label{sec:LOCC}
\subsubsection{Local operations} \label{sec:local-op}
It is time to have a closer look at those operations that different agents 
that share some entangled system can perform 
in the kind of protocols studied in the theory of quantum information.
Such operations are traditionally described as LOCC: local operations and classical
communication.

From now on, we shall assume that the first agent, the one that sees the $\bH_{1}$
part of the system is called Alice and we shall assume that Alice is the only active agent. 
Obviously, what we say about Alice's actions applies to any other agent's actions.
The characteristic feature of a local operation of Alice is that it acts only on $\bH_{1}$.
Any local operation of Alice can be characterized by a linear operator
\mbox{$f : \bH_{1} \rightarrow \bH_{1}$}. Its effect on the global state $\rho$ is to transform
the global state $\rho$ into \mbox{$f' \circ \rho$} where $f'$ is defined below.
\begin{definition}  \label{def:f'}
For any \mbox{$f : \bH_{1} \rightarrow \bH_{1}$} we let 
\mbox{$f' =$} \mbox{$ f \otimes {id}_{\bH_{1^{-}}} $}.
\end{definition}
There are two kinds of local operations that Alice can perform: 
\begin{itemize}
\item unitary transformations and 
\item measurements.
\end{itemize}
We shall study them in Sections~\ref{sec:unitary} and~\ref{sec:measurements}
respectively. 

\subsubsection{Classical communication} \label{sec:classical}
The second sort of action that Alice can take is to transfer information, in other terms 
{\em talk}, to some of the other agents. 
This is done by classical means and therefore does not 
change the global state. This operation is particularly important in connection with
measurements. Typically, Alice will communicate to her partners the results of 
generalized measurements she has performed. Once this is done, the ensuing operations
of the other agents may depend on the information they received from Alice on the
results of her measurements.

\subsection{Unitary local operations} \label{sec:unitary}
If \mbox{$U : \bH_{1} \rightarrow \bH_{1}$} is unitary, 
then \mbox{$U' =$} \mbox{$U \otimes {id}_{\bH_{i^{-}}} :$}
\mbox{$\bH \rightarrow \bH$} is also unitary.
Therefore \mbox{$Tr ( U' \circ \rho \circ {U'}^{\ast} ) =$}
\mbox{$Tr ( \rho \circ {U'}^{\ast} \circ U' =$}
\mbox{$Tr ( \rho ) =$} $1$ and the normalization factor is $1$. 
The mixed state $\rho$ is
transformed into \mbox{$\rho' =$}
\mbox{$U' \circ \rho \circ {U'}^{\ast}$}, by
Equation~(\ref{eq:rho-prime}).

We can now describe in full the effect of Alice's local unitary operation on all
the local mixed states
\begin{theorem} \label{the:unitary-other}
Alice's local unitary operations do not change the local mixed states of other
agents.
\end{theorem}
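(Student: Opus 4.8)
\section*{Proof proposal for Theorem~\ref{the:unitary-other}}

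The plan is to reduce the claim to Theorem~\ref{the:aux}. Fix an agent $i \neq 1$; I must show that Alice's unitary operation leaves $Tr_{\bH_{i^{-}}}(\rho)$ unchanged, i.e.\ that $Tr_{\bH_{i^{-}}}(U' \circ \rho \circ {U'}^{\ast}) = Tr_{\bH_{i^{-}}}(\rho)$, where $U' = U \otimes {id}_{\bH_{1^{-}}}$ as in Definition~\ref{def:f'}. By Equation~(\ref{eq:mixed-state}) this is precisely the statement that the local state $\rho_{i}^{\rho'}$ of agent $i$ after the operation coincides with $\rho_{i}^{\rho}$ before it.

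The key observation is that, because $i \neq 1$, the space $\bH_{1}$ is one of the tensor factors making up $\bH_{i^{-}}$. Reading $U'$ with respect to the factorization $\bH = \bH_{i} \otimes \bH_{i^{-}}$ rather than $\bH = \bH_{1} \otimes \bH_{1^{-}}$, one sees that $U'$ acts as the identity on the $\bH_{i}$ factor, so that $U' = {id}_{\bH_{i}} \otimes V$, where $V : \bH_{i^{-}} \rightarrow \bH_{i^{-}}$ is the unitary that applies $U$ to the $\bH_{1}$ factor of $\bH_{i^{-}}$ and leaves all remaining factors fixed. The only thing to verify here is that this re-bracketing is legitimate, which amounts to the associativity and commutativity of the tensor product up to the canonical reordering of factors. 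I would state this step explicitly, since it is the one genuinely fiddly point of the argument, even though it is mathematically transparent; this is where I expect the main (purely bookkeeping) obstacle to lie.

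With $U' = {id}_{\bH_{i}} \otimes V$ in hand, the conclusion is immediate from Theorem~\ref{the:aux} applied with $A = \bH_{i}$, $B = \bH_{i^{-}}$, $u_{A} = {id}_{\bH_{i}}$ and $u_{B} = V$: that theorem yields $Tr_{\bH_{i^{-}}}(({id}_{\bH_{i}} \otimes V) \circ \rho \circ ({id}_{\bH_{i}} \otimes V^{\ast})) = {id}_{\bH_{i}} \circ Tr_{\bH_{i^{-}}}(\rho) \circ {id}_{\bH_{i}}^{\ast} = Tr_{\bH_{i^{-}}}(\rho)$, which is exactly the required identity. Alternatively, one could bypass Theorem~\ref{the:aux} and argue directly from the commutation property~(\ref{eq:commutation}): cycling ${id}_{\bH_{i}} \otimes V^{\ast}$ to the front of the composition turns the operator being traced into $({id}_{\bH_{i}} \otimes V^{\ast}V) \circ \rho = \rho$, giving the same conclusion. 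Since $i \neq 1$ was arbitrary, the local mixed state of every agent other than Alice is left invariant, which is the assertion of the theorem.
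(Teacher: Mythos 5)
Your proof is correct and follows essentially the same route as the paper's: the paper's own proof is a one-line application of Theorem~\ref{the:aux}, implicitly rewriting $U'$ as \mbox{${id}_{\bH_{i}} \otimes V$} with respect to the factorization \mbox{$\bH = \bH_{i} \otimes \bH_{i^{-}}$} so that the unitary acting on the retained factor is the identity. The only difference is that you make this re-bracketing step explicit, which the paper glosses over (its displayed formula even writes \mbox{$U \otimes {id}_{\bH_{i^{-}}}$}, a slight abuse of notation since $\bH_{1}$ is itself a factor of $\bH_{i^{-}}$).
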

\begin{proof}
For \mbox{$i > 1$} 
\begin{equation} \label{eq:unitary-other}
\rho_{i}^{U' \circ \sigma \circ {U'}^{\ast}} \ = \ Tr_{\bH_{i^{-}} } ( ( U \otimes {id}_{\bH_{i^{-}}} ) \circ
\sigma \circ ( {U}^{\ast} \otimes {id}_{\bH_{i^{-}}} ) ) \ = \ 
Tr_{\bH_{i^{-}}} (\sigma) \ = \ \rho_{i}^{\sigma}
\end{equation}
by Theorem~\ref{the:aux}.
\end{proof}
This is indeed as expected: Bob is not affected by and cannot detect a unitary operation of Alice.
The effect of a unitary operation performed by Alice on her own local state is also
as expected.
\begin{theorem} \label{the:unitary-self}
A unitary local operation $U$ of Alice transforms her local mixed state $\rho_{1}^{\sigma}$ into
\mbox{$U \circ \rho_{1}^{\sigma} \circ {U}^{\ast}$}, as if Alice were alone.
Therefore it does not change the spectrum of her local state.
\end{theorem}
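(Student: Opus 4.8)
The plan is to obtain the first claim as the special case $u_{B} = {id}_{\bH_{1^{-}}}$ of the already-proved Theorem~\ref{the:aux}, and then to derive the statement about the spectrum from the elementary fact that conjugation by a unitary is a similarity transformation.

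First I would recall, from the remarks preceding the theorem, that a unitary local operation $U$ of Alice acts globally through $U' = U \otimes {id}_{\bH_{1^{-}}}$, that $U'$ is unitary, and that the normalization factor in Equation~(\ref{eq:rho-prime}) is therefore $1$; hence $\sigma$ is transformed into $\sigma' = U' \circ \sigma \circ {U'}^{\ast}$. By the definition of the local state in Equation~(\ref{eq:mixed-state}), Alice's new local state is $\rho_{1}^{\sigma'} = Tr_{\bH_{1^{-}}}(\sigma')$. Taking $A = \bH_{1}$, $B = \bH_{1^{-}}$, $u_{A} = U$ and $u_{B} = {id}_{\bH_{1^{-}}}$ (which is unitary), Theorem~\ref{the:aux} gives at once
\[
Tr_{\bH_{1^{-}}} ( ( U \otimes {id}_{\bH_{1^{-}}} ) \circ \sigma \circ ( {U}^{\ast} \otimes {id}_{\bH_{1^{-}}} ) ) \ = \ U \circ Tr_{\bH_{1^{-}}}( \sigma ) \circ {U}^{\ast} \ = \ U \circ \rho_{1}^{\sigma} \circ {U}^{\ast},
\]
which is precisely the asserted transformation law.

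For the remaining claim, I would note that since $U$ is unitary we have ${U}^{\ast} = U^{-1}$, so $U \circ \rho_{1}^{\sigma} \circ {U}^{\ast}$ is related to $\rho_{1}^{\sigma}$ by conjugation with an invertible operator. Such a similarity preserves the characteristic polynomial and hence the eigenvalues together with their multiplicities, so $Sp(\rho_{1}^{\sigma'}) = Sp(\rho_{1}^{\sigma})$. I do not anticipate any real obstacle: the entire argument is the specialization $u_{B} = {id}$ of Theorem~\ref{the:aux} followed by the standard observation that unitary conjugation is spectrum-preserving; the only point worth stating explicitly is that ${id}_{\bH_{1^{-}}}$ qualifies as the unitary $u_{B}$ required by that theorem.
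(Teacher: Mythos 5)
Your proof is correct and follows essentially the same route as the paper: the paper's own argument is exactly the application of Theorem~\ref{the:aux} (with $u_{B}$ implicitly taken to be the identity on $\bH_{1^{-}}$) followed by the observation that $Sp(U \circ f \circ {U}^{\ast}) = Sp(f)$. Your explicit remark that ${id}_{\bH_{1^{-}}}$ is unitary and so qualifies as the $u_{B}$ of that theorem is a small but welcome clarification of a step the paper leaves tacit.
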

\begin{proof}
\begin{equation} \label{eq:unitary-self}
\rho_{1}^{U' \circ \sigma \circ {U'}^{\ast}} \ = \ Tr_{\bH_{1^{-}} } ( ( U \otimes {id}_{\bH_{1^{-}}} ) \circ
\sigma \circ ( {U}^{\ast} \otimes {id}_{\bH_{1^{-}}} ) ) \ = \ 
U \circ \rho_{1}^{\sigma} \circ {U}^{\ast}
\end{equation}
by Theorem~\ref{the:aux}.
Note, now, that for any operator $f$, \mbox{$Sp ( U \circ f \circ {U}^{\ast}) =$}
\mbox{$Sp(f)$}. 
\end{proof}

\subsection{Local measurements} \label{sec:measurements}
\subsubsection{Generalized measurements} \label{sec:generalized}
If one decides to measure some observable represented 
by a generalized measurement, i.e., a sequence 
\mbox{$f_{1} , \ldots , f_{m}$} of operators:
\mbox{$f_{j} : A \rightarrow A$} for \mbox{$j = 1 , \ldots , m$} that satisfy:
\begin{equation} \label{eq:gen}
\sum_{j = 1}^{m} {f_{j}}^{\ast} \circ f_{j} \ = \ {id}_{A}
\end{equation}
one will obtain some result, i.e., some $j$ for his measurement.
The state of the system defines only a probability distribution on the possible results. 
If the state is $\rho$ then the probability of obtaining result $j$ is given by:
\begin{equation} \label{eq:prob-gen}
p_{j} \ = \ Tr ( f_{j}  \circ \rho \circ {f}_{j}^{\ast}) \ = \ Tr ( f_{j} \circ {f}_{j}^{\ast} \circ \rho ).
\end{equation}
The state $\rho$ is changed by the measurement. 
If \mbox{$p_{j} =$} $0$, the result $j$ is never obtained.
If \mbox{$p_{j} >$} $0$, the new state is given by:
\begin{equation} \label{eq:new-state}
\rho'_{j} \ = \ {{1} \over {p_{j}}} \ f_{j} \circ \rho \circ f_{j}^{\ast} .
\end{equation}

A word of caution is in order here. 
The way our generalized measurements have been described above corresponds 
to what is usually named {\em measuring} POVM 
(positive operator valued measure) measurement in the literature,
in which the agent records the result of the measurement performed.
Another form of generalized measurement has also been considered 
in the literature: {\em trace preserving} POVM measurements, in which the agent
does not record the result of the measurement. 
The notion of LOCC operations considered in this paper 
does not include trace preserving operations. 
For a description of these different types of measurement, see, 
for example, the very useful survey~\cite{Plenio-Virmani:intro}.

\subsubsection{Expected spectrum} \label{sec:expected_spectrum}
Measurements, contrary to the unitary and classical operations considered in 
Sections~\ref{sec:unitary} and~\ref{sec:classical} respectively, do not transform a state into
a state. It is a fundamental property of Quantum Physics that measurements transform a state
(pure or mixed) into a probability distribution on states (pure or mixed).
For many purposes, in Quantum Physics, a probability distribution on states can be confused
with the (mixed) state which is the linear combination of the original states weighted by their
respective probabilities. Such a confusion causes no problem as long as the quantic operations
considered are linear. But measurements are not linear transformations and,  
as noticed in Section~\ref{sec:spectrum}, we wish to attach a spectrum with such a probability
distribution. 
The spectrum associated with a system that is in state $\rho_{i}$ with probability
$p_{i}$, \mbox{$i = 1 , 2$}, is \mbox{$\sum_{i = 1}^{2} p_{i} Sp(\rho_{i})$}, i.e.,
the spectrum the $k$'th element of which is a combination of the $k$'th elements of
$\rho_{1}$ and $\rho_{2}$ respectively. This spectrum is not equal to the spectrum of
the state \mbox{$\sum_{i = 1}^{2} p_{i} \rho_{i}$}. In the sequel, probability distributions on
states are not to be confused with (mixed) states.
Section~\ref{sec:local-gen} studies the expected spectrum that results from a local generalized
measurement.
 
\subsubsection{Local generalized measurements} \label{sec:local-gen} 
Suppose Alice performs a local generalized measurement 
\mbox{$f_{1} , \ldots , f_{m} : $} 
\mbox{$\bH_{1} \rightarrow \bH_{1}$} on her piece of the global system, 
which is in global state $\sigma$.
If, following our custom, we let 
\mbox{$f'_{i} : f_{i} \otimes {id}_{\bH_{i^{-}}}$},
the probability of obtaining result $j$ is given by:
\begin{equation} \label{eq:prob-local}
p_{j} \ = \ Tr ( f'_{j} \circ {f'}_{j}^{\ast} \circ \sigma ) \ = \ 
Tr ( Tr_{\bH_{1^{-}}} ( f'_{j} \circ {f'}_{j}^{\ast} \circ \sigma ) ) \ = \ 
\end{equation}
\[
Tr ( f_{j} \circ f_{j}^{\ast} \circ Tr_{\bH_{1^{-}}} ( \sigma) ) \ = \ 
Tr ( f_{j} \circ f_{j}^{\ast} \circ \rho_{1}^{\sigma} )
\]
by Equations~(\ref{eq:prob-gen}), (\ref{eq:trace1}), (\ref{eq:self-trace2})
and~(\ref{eq:mixed-state}). 
We note that, as expected, 
\begin{equation} \label{eq:sum-one}
\sum_{j = 1}^{m} p_{j} \ = \ 
\sum_{j = 1}^{m} Tr ( f_{j} \circ f_{j}^{\ast} \circ \rho_{1}^{\sigma} ) \ = \ 
Tr ( \sum _{j = 1}^{m} f_{j} \circ f_{j}^{\ast}  \circ \rho_{1}^{\sigma} ) \ = \ 
\end{equation}
\[
Tr ( ( \sum _{j = 1}^{m} f_{j} \circ f_{j}^{\ast} ) \, \rho_{1}^{\sigma} ) \ = \ 
Tr ( \rho_{1}^{\sigma} ) \ = \ 1.
\]
Alice may obtain result $j$ for her measurement only if \mbox{$p_{j} > 0$}, and, then, 
the new global state of the system is:
\begin{equation} \label{eq:newsigma}
\sigma'_{j} \ = \ { 1 \over p_{j}} \  f'_{j} \circ \sigma \circ {f'}_{j}^{\ast} .
\end{equation}
If \mbox{$p_{j} =$} $0$, then the operator  
\mbox{$Tr_{\bH_{i^{-}}} ( f'_{j} \circ \sigma \circ {f'}_{j}^{\ast} ) $} is
self-adjoint, weakly positive and has a trace equal to $0$, it is therefore equal to zero.
We conclude that, for any \mbox{$j = 1 , \ldots , m$}, one has:
\begin{equation} \label{eq:allj}
p_{j} \, \sigma'_{j} \ = \ f'_{j} \circ \sigma \circ {f'}_{j}^{\ast}.
\end{equation}

We shall now study the expected effect of a measurement of Alice on each of the local states. 
We shall show that the expected local spectrum always majorizes the current local spectrum.
The proof is based on Corollary~\ref{co:sum}, but 
we need to distinguish two cases. First, we treat the effect of Alice's measurement on other agents' local states, and then its effect on Alice's own local state.
 
\subsubsection{Effect of Alice's measurements on other agents' state} 
\label{sec:otheragent-measurement}

The new local state of any agent $i$, other than Alice, i.e., \mbox{$i > 1$}, 
after Alice has obtained result $j$ is therefore:
\begin{equation} \label{eq:other-than-Alice}
\rho_{i}^{\sigma'_{j}} \ = \ 
Tr_{\bH_{i^{-}}} (\sigma'_{j}) \ = \ 
{1 \over p_{j} } \ Tr_{\bH_{i^{-}}} (f'_{j} \circ \sigma \circ {f'}_{j}^{\ast} )
\end{equation}
by Equation~(\ref{eq:commutation}).

An example of {\em two} entangled systems presented in Appendix~\ref{sec:expect-only}
shows that it is not the case, even for 2-entanglement, that the eigenvalues of
$\rho_{i}^{\sigma'}$ majorize those of $\rho_{i}^{\sigma}$.
An individual measurement by Alice can make the mixed state of 
another agent more chaotic.

Nevertheless, we shall show that, in expectation, taking into account all the
possible results of Alice's measurement, the eigenvalues of the new mixed state
of any agent different from Alice majorize those of the old state of this agent.
The mixed state of the agent becomes less chaotic, {\em purer}. 

\begin{theorem} \label{the:meas-other}
The expected eigenvalues of Bob's local state after a measurement by Alice majorize
the eigenvalues of Bob's current local state.
\begin{equation} \label{eq:meas-other}
\sum_{j = 1}^{m} p_{j} \, Sp( \rho_{i}^{\sigma'_{j}} ) \succeq Sp ( \rho_{i}^{\sigma} )
\end{equation}
for any \mbox{$i > 1$}.
\end{theorem}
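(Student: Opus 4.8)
The plan is to reduce the claim to the additive majorization inequality of Corollary~\ref{co:sum}, after first showing that the \emph{expected} local state of agent $i$, viewed as a genuine mixed state, is exactly agent $i$'s original local state. Fix $i > 1$ and abbreviate $A = \bH_{i}$, $B = \bH_{i^{-}}$, so that $\bH = A \otimes B$. Since $i > 1$, Alice's factor $\bH_{1}$ sits inside $B$, and hence each $f'_{j} = f_{j} \otimes {id}_{\bH_{1^{-}}}$ can be rewritten, relative to the decomposition $A \otimes B$, as $f'_{j} = {id}_{A} \otimes g_{j}$, where $g_{j} : B \rightarrow B$ acts as $f_{j}$ on the $\bH_{1}$ factor of $B$ and as the identity on the remaining factors. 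In particular $\sum_{j} g_{j}^{\ast} \circ g_{j} = (\sum_{j} f_{j}^{\ast} \circ f_{j}) \otimes {id} = {id}_{B}$ by Equation~(\ref{eq:gen}).

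First I would establish the operator identity $\sum_{j = 1}^{m} p_{j} \, \rho_{i}^{\sigma'_{j}} = \rho_{i}^{\sigma}$. Starting from Equation~(\ref{eq:allj}) one has $p_{j} \, \rho_{i}^{\sigma'_{j}} = Tr_{B}(f'_{j} \circ \sigma \circ {f'_{j}}^{\ast})$; summing over $j$ and using linearity of the partial trace gives $\sum_{j} p_{j} \, \rho_{i}^{\sigma'_{j}} = Tr_{B}\big(\sum_{j} ({id}_{A} \otimes g_{j}) \circ \sigma \circ ({id}_{A} \otimes g_{j}^{\ast})\big)$. Applying the commutation property~(\ref{eq:commutation}) to move $g_{j}^{\ast}$ to the left of each summand turns that summand into $({id}_{A} \otimes (g_{j}^{\ast} \circ g_{j})) \circ \sigma$, and the collapse $\sum_{j} g_{j}^{\ast} \circ g_{j} = {id}_{B}$ reduces the whole expression to $Tr_{B}(({id}_{A} \otimes {id}_{B}) \circ \sigma) = Tr_{B}(\sigma) = \rho_{i}^{\sigma}$. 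Thus the weighted average of the post-measurement local states, taken as operators, reproduces the pre-measurement local state exactly.

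With this in hand, set $A_{j} \eqdef p_{j} \, \rho_{i}^{\sigma'_{j}}$. Each $A_{j}$ is self-adjoint, and since $p_{j} \geq 0$ we have $Sp(A_{j}) = p_{j} \, Sp(\rho_{i}^{\sigma'_{j}})$ (summands with $p_{j} = 0$ are the zero operator and may be omitted). The identity of the previous paragraph reads $\sum_{j} A_{j} = \rho_{i}^{\sigma}$, so the desired inequality~(\ref{eq:meas-other}) is precisely $\sum_{j} Sp(A_{j}) \succeq Sp(\sum_{j} A_{j})$, i.e.\ the $m$-summand version of Corollary~\ref{co:sum}.

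The main obstacle is that Corollary~\ref{co:sum} is stated for two matrices, and one cannot simply iterate it: majorization is not preserved under adding a fixed vector to both sides of a majorization, so the naive induction on the number of summands breaks. Instead I would reprove the $m$-ary inequality directly from the Ky Fan maximum principle (Theorem~\ref{the:Fan}), following the proof of Corollary~\ref{co:sum} verbatim. For each $q$, the $q$-th partial sum of $Sp(\sum_{j} A_{j})$ equals $\max_{b} w_{\sum_{j} A_{j}}(b)$ over all $q$-bases $b$ of $\bH$; since $w_{\sum_{j} A_{j}}(b) = \sum_{j} w_{A_{j}}(b)$ with each $w_{A_{j}}(b) \leq \sum_{i = 1}^{q} \big(Sp(A_{j})\big)_{i}$, one obtains $\sum_{i = 1}^{q} \big(Sp(\sum_{j} A_{j})\big)_{i} \leq \sum_{j} \sum_{i = 1}^{q} \big(Sp(A_{j})\big)_{i} = \sum_{i = 1}^{q} \big(\sum_{j} Sp(A_{j})\big)_{i}$ for every $q$. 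This is exactly the required majorization, and substituting $Sp(A_{j}) = p_{j} \, Sp(\rho_{i}^{\sigma'_{j}})$ and $\sum_{j} A_{j} = \rho_{i}^{\sigma}$ yields~(\ref{eq:meas-other}).
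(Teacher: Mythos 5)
Your proof is correct and takes essentially the same route as the paper: the identity $\sum_{j} p_{j}\,\rho_{i}^{\sigma'_{j}} = \rho_{i}^{\sigma}$ via the commutation property~(\ref{eq:commutation}) and the completeness relation~(\ref{eq:gen}), followed by the Ky Fan majorization inequality, is exactly the paper's argument (merely presented in the opposite order, and with the helpful explicit observation that $f'_{j}$ has the form ${id}_{\bH_{i}} \otimes g_{j}$ relative to the split $\bH_{i} \otimes \bH_{i^{-}}$). One quibble: your stated reason for re-deriving the $m$-ary version of Corollary~\ref{co:sum} is mistaken --- under the paper's convention that spectra are added componentwise in decreasing order, adding a fixed spectrum to both sides of a majorization \emph{does} preserve it (the partial sums simply add), so the naive induction on the number of summands works; your direct derivation from Theorem~\ref{the:Fan} is nevertheless valid and is what the paper implicitly relies on when it applies the corollary to $m$ summands.
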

\begin{proof}
By Equation~(\ref{eq:other-than-Alice}) we have:
\begin{equation} \label{eq:th1}
\sum_{j = 1}^{m} p_{j} \, Sp( \rho_{i}^{\sigma'_{j}} ) \ = \ 
\sum_{j = 1}^{m} Sp( p_{j} \, \rho_{i}^{\sigma'_{j}} ) \ = \ 
\sum_{j = 1}^{m} Sp ( Tr_{\bH_{i^{-}}} ( {f'}_{j} \circ \sigma \circ {f'}_{j}^{\ast} ) )
\end{equation}
and by Corollary~\ref{co:sum}:
\begin{equation} \label{eq:Fan-other}
\sum_{j = 1}^{m} Sp ( Tr_{\bH_{i^{-}}} ( {f'}_{j} \circ \sigma \circ {f'}_{j}^{\ast} ) ) \succeq
Sp ( \sum_{j = 1}^{m} Tr_{\bH_{i^{-}}} (  {f'}_{j} \circ \sigma \circ {f'}_{j}^{\ast} ) ).
\end{equation}
By property~\ref{com} of the partial trace operator in Section~\ref{sec:mixed-local}:
\[
Tr_{\bH_{i^{-}}} (  {f'}_{j} \circ \sigma \circ {f'}_{j}^{\ast} ) \ = \ 
Tr_{\bH_{i^{-}}} (  {f'}_{j}^{\ast} \circ {f'}_{j} \circ \sigma ) 
\]
and, by Equation~(\ref{eq:gen})
\begin{equation} \label{eq:th2}
\sum_{j = 1}^{m} Tr_{\bH_{i^{-}}} (  {f'}_{j}^{\ast} \circ {f'}_{j} \circ \sigma) ) \ = \ 
Tr_{\bH_{i^{-}}} ( \sum_{j = 1}^{m} {f'}_{j}^{\ast} \circ {f'}_{j} \circ \sigma) \ = \ 
Tr_{\bH_{i^{-}}} ( \sigma ) \ = \ \rho_{i}^{\sigma}.
\end{equation}
\end{proof}

\subsubsection{Effect of Alice's measurement on her own local state} 
\label{sec:self-measurement}
Once Alice has obtained result $j$ her new local state, using Equations~(\ref{eq:self-trace1})
and~(\ref{eq:self-trace2}) is given by:
\begin{equation} \label{eq:Alice-own}
\rho_{1}^{\sigma'_{j}} \ = \ 
Tr_{\bH_{1^{-}}} (\sigma'_{j}) \ = \ 
{1 \over p_{j} } \ Tr_{\bH_{1^{-}}} (f'_{j} \circ \sigma \circ {f'}_{j}^{\ast} ) \ = \ 
\end{equation}
\[
{1 \over p_{j} } \ f_{j} \circ Tr_{\bH_{1^{-}}} ( \sigma ) \circ f_{j}^{\ast} \ = \
{1 \over p_{j} } \ f_{j} \circ \rho_{1}^{\sigma} \circ f_{j}^{\ast}.
\]
Indeed, Alice's state can be computed locally, using the local mixed state $\rho_{1}^{\sigma}$
and the local operations $f_{j}$.
We may now prove a result similar to Theorem~\ref{the:meas-other}.
\begin{theorem} \label{the:meas-self}
The expected eigenvalues of Alice's local state after her measurement majorize
the eigenvalues of her current local state.
\begin{equation} \label{eq:meas-Alice}
\sum_{j = 1}^{m} p_{j} \, Sp( \rho_{1}^{\sigma'_{j}} ) \succeq Sp ( \rho_{1}^{\sigma} ).
\end{equation}
\end{theorem}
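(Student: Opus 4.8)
The plan is to open exactly as in the proof of Theorem~\ref{the:meas-other}. By Equation~(\ref{eq:Alice-own}) one has $p_{j} \, \rho_{1}^{\sigma'_{j}} = f_{j} \circ \rho_{1}^{\sigma} \circ f_{j}^{\ast}$, and since scaling a weakly positive operator scales its spectrum, this gives $\sum_{j = 1}^{m} p_{j} \, Sp(\rho_{1}^{\sigma'_{j}}) = \sum_{j = 1}^{m} Sp(f_{j} \circ \rho_{1}^{\sigma} \circ f_{j}^{\ast})$. The natural continuation would be to apply Corollary~\ref{co:sum} to bound this below by $Sp(\sum_{j} f_{j} \circ \rho_{1}^{\sigma} \circ f_{j}^{\ast})$ and to hope the argument collapses to $\rho_{1}^{\sigma}$.

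This is exactly where the main obstacle lies, and why a new idea is needed relative to Theorem~\ref{the:meas-other}. In that theorem the operators $f'_{j}$ acted on a factor that was being traced out, so the commutation property~(\ref{eq:commutation}) let us cyclically rewrite $f'_{j} \circ \sigma \circ {f'}_{j}^{\ast}$ as $f_{j}^{\ast} \circ f_{j} \circ \sigma$ under the trace and then invoke $\sum_{j} f_{j}^{\ast} \circ f_{j} = {id}$. For Alice's own state there is no partial trace available to absorb such a cyclic permutation, and indeed $\sum_{j} f_{j} \circ \rho_{1}^{\sigma} \circ f_{j}^{\ast}$ is in general \emph{not} equal to $\rho_{1}^{\sigma}$: the map $\rho \mapsto \sum_{j} f_{j} \circ \rho \circ f_{j}^{\ast}$ is trace preserving but need not be unital, so the naive closing step fails.

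The step that I expect to circumvent this is to use that $A \circ A^{\ast}$ and $A^{\ast} \circ A$ carry the same nonzero spectrum. I would introduce the positive square root $g = (\rho_{1}^{\sigma})^{1/2}$, which exists because $\rho_{1}^{\sigma}$ is self-adjoint and weakly positive and which satisfies $g = g^{\ast}$ and $g \circ g = \rho_{1}^{\sigma}$. Then $f_{j} \circ \rho_{1}^{\sigma} \circ f_{j}^{\ast} = (f_{j} \circ g) \circ (f_{j} \circ g)^{\ast}$, so $Sp(f_{j} \circ \rho_{1}^{\sigma} \circ f_{j}^{\ast}) = Sp((f_{j} \circ g)^{\ast} \circ (f_{j} \circ g)) = Sp(g \circ f_{j}^{\ast} \circ f_{j} \circ g)$, the two spectra agreeing up to the zeros with which spectra are padded. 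This substitution moves the adjoint to the outside, which is precisely what makes $\sum_{j} f_{j}^{\ast} \circ f_{j} = {id}$ usable.

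With this replacement in hand the argument should close as originally intended. Corollary~\ref{co:sum} gives $\sum_{j} Sp(g \circ f_{j}^{\ast} \circ f_{j} \circ g) \succeq Sp(\sum_{j} g \circ f_{j}^{\ast} \circ f_{j} \circ g)$, and by linearity together with Equation~(\ref{eq:gen}) one has $\sum_{j} g \circ f_{j}^{\ast} \circ f_{j} \circ g = g \circ (\sum_{j} f_{j}^{\ast} \circ f_{j}) \circ g = g \circ g = \rho_{1}^{\sigma}$. Chaining the equalities and the majorization then yields $\sum_{j = 1}^{m} p_{j} \, Sp(\rho_{1}^{\sigma'_{j}}) \succeq Sp(\rho_{1}^{\sigma})$. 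The only delicate point I foresee is the bookkeeping of zeros in the identity $Sp(A \circ A^{\ast}) = Sp(A^{\ast} \circ A)$, but this is harmless since the two sides are only ever compared through the zero-padded majorization order of Definition~\ref{def:major}.
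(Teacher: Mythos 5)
Your proposal is correct and follows essentially the same route as the paper: the paper likewise takes the positive square root $\alpha$ of $\rho_{1}^{\sigma}$, sets $\beta_{j} = f_{j} \circ \alpha$, invokes $Sp(\beta_{j} \circ \beta_{j}^{\ast}) = Sp(\beta_{j}^{\ast} \circ \beta_{j})$ (its Theorem~\ref{the:sp=}) to move the adjoints outside, and then closes with Corollary~\ref{co:sum} and Equation~(\ref{eq:gen}) exactly as you do. Your remark on zero-padding is in fact moot here, since both $\beta_{j} \circ \beta_{j}^{\ast}$ and $\beta_{j}^{\ast} \circ \beta_{j}$ act on the same space $\bH_{1}$ and so have spectra of equal length.
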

\begin{proof}
By Equation~(\ref{eq:Alice-own})
\begin{equation} \label{eq:th3}
\sum_{j = 1}^{m} p_{j} \, Sp( \rho_{1}^{\sigma'_{j}} ) \ = \ 
\sum_{j = 1}^{m} Sp( p_{j} \, \rho_{1}^{\sigma'_{j}} ) \ = \ 
\sum_{j = 1}^{m} Sp ( f_{j} \circ \rho_{1}^{\sigma} \circ f_{j}^{\ast} )
\end{equation}
The operator $\rho_{1}^{\sigma}$ is self-adjoint and weakly positive, it has therefore
a square root, i.e., a self-adjoint, weakly positive operator \mbox{$\alpha : \bH_{1} \rightarrow \bH_{1}$}
such that \mbox{$\rho_{1}^{\sigma} =$} \mbox{$\alpha \circ {\alpha}^{\ast}$}. 
Let \mbox{$\beta_{j} =$} \mbox{$f_{j} \circ \alpha$}. We have
\mbox{$f_{j} \circ \rho_{1}^{\sigma} \circ f_{j}^{\ast} =$} \mbox{$\beta_{j} \circ {\beta}_{j}^{\ast}$}. 
But \mbox{$Sp ( \beta_{j} \circ {\beta}_{j}^{\ast} ) =$} \mbox{$Sp ( {\beta}_{j}^{\ast} \circ \beta_{j} )$},
as is proved in Theorem~\ref{the:sp=} in Appendix~\ref{sec:well-known}.
We conclude that
\[
Sp ( f_{j} \circ \rho_{1}^{\sigma} \circ f_{j}^{\ast} ) \ = \ 
Sp ( {\alpha}^{\ast} \circ f_{j}^{\ast} \circ f_{j} \circ \alpha ).
\]
By Corollary~\ref{co:sum} and Equation~(\ref{eq:gen}) one has:
\begin{equation} \label{eq:Fan-Alice}
\sum_{j = 1}^{m} Sp ( \alpha^{\ast} \circ f_{j}^{\ast} \circ f_{j} \circ \alpha ) \succeq
Sp ( \sum_{j = 1}^{m}  \alpha^{\ast} \circ f_{j}^{\ast} \circ f_{j} \circ \alpha) \ = \ 
Sp ( \alpha^{\ast} \circ \alpha ) \ = \ Sp ( \rho_{1}^{\sigma} ).
\end{equation}
\end{proof}

Theorem~\ref{the:meas-self} also holds when Alice is alone in the universe:
in expectation the result of a generalized measurement always majorizes the 
initial state. As a consequence, in expectation, the entropy cannot increase as 
a result of a measurement. This fits in well with the idea that a measurement 
always reduces uncertainty.

\subsection{LOCC operations weakly increase the spectra of all local states in the majorization order}
\label{sec:increase}
\begin{theorem} \label{the:increase}
In any LOCC protocol, the spectrum of any initial local state is majorized by its expected final 
local spectrum in the majorization order.
\end{theorem}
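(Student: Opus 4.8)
The plan is to reduce the statement about an entire protocol to the per-round results already established, namely Theorems~\ref{the:unitary-other}, \ref{the:unitary-self}, \ref{the:meas-self} and~\ref{the:meas-other}, together with transitivity of the majorization pre-order. First I would fix a model of an LOCC protocol: it is a finite sequence of \emph{rounds}, where in each round a single agent performs one local operation --- a unitary, a generalized measurement, or a piece of classical communication --- and the choice of operation may depend on all measurement results communicated so far. Unrolling the possible measurement outcomes turns the protocol into a finite tree whose nodes carry a global state and a probability (the product of the conditional probabilities along the path to that node). For each agent $i$ and each time $t$, I define the \emph{expected local spectrum} $S_i^{(t)} = \sum_b p_b \, Sp(\rho_i^{\sigma_b})$, the sum taken over all branches $b$ alive at time $t$ weighted by their probabilities $p_b$; since each $Sp(\rho_i^{\sigma_b})$ is ordered decreasingly, the nonnegative combination $S_i^{(t)}$ is again ordered decreasingly, so no re-sorting is needed. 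If the initial global state is itself a distribution over mixed states, one simply starts the tree with several roots; the argument is unchanged.

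Next I would isolate the one fact about majorization that makes expectations behave well: if $\alpha_b \succeq \beta_b$ for every $b$ and all $p_b \geq 0$, then $\sum_b p_b \alpha_b \succeq \sum_b p_b \beta_b$. This is immediate from Definition~\ref{def:major}, because the defining partial sums $\sum_{j=1}^{k}$ are linear in the (already sorted) spectra and the combinations remain sorted, so the partial-sum inequalities add up termwise.

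Then the heart of the argument is a one-round claim: after a single round, for every agent $i$ one has $S_i^{(t+1)} \succeq S_i^{(t)}$. I would prove this branch by branch. Within a branch $b$ the operating agent is fixed. If the round is a unitary operation or classical communication, every local spectrum is unchanged (Theorems~\ref{the:unitary-self} and~\ref{the:unitary-other}, and the fact that communication leaves the global state fixed), so the inequality holds with equality by reflexivity. If the round is a measurement by some agent $a$, then Theorem~\ref{the:meas-self} applied to $a$ (who plays the role of Alice, the results holding for any active agent) gives $\sum_{b'} p_{b'\mid b}\, Sp(\rho_a^{\sigma_{b'}}) \succeq Sp(\rho_a^{\sigma_b})$ over the children $b'$ of $b$, while Theorem~\ref{the:meas-other} gives the same majorization for every agent $i \neq a$. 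In all cases the $p_{b'\mid b}$-weighted sum of the children's local spectra majorizes the branch's current local spectrum. Multiplying each branch inequality by $p_b$ and summing, via the convexity lemma above, yields $S_i^{(t+1)} \succeq S_i^{(t)}$ for every $i$.

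Finally I would conclude by induction on the number of rounds $T$, using transitivity: starting from $S_i^{(0)} = Sp(\rho_i^{\sigma_0})$, the chain $S_i^{(T)} \succeq S_i^{(T-1)} \succeq \cdots \succeq S_i^{(0)}$ gives the desired $S_i^{(T)} \succeq Sp(\rho_i^{\sigma_0})$ for each agent $i$. The main obstacle is not any single inequality --- those are delivered by the earlier theorems --- but the careful bookkeeping of the branching and of the \emph{conditional} versus \emph{total} expectations: one must check that the per-branch majorizations, which compare children to their own parent, aggregate correctly into a statement comparing the entire time-$(t{+}1)$ distribution to the entire time-$t$ distribution. The convexity lemma is exactly what licenses this aggregation, and verifying that it applies --- in particular that the partial-sum inequalities survive nonnegative combination without any re-sorting --- is the step I would treat most carefully.
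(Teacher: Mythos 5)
Your proposal is correct and follows essentially the same route as the paper, which simply cites Section~\ref{sec:classical} and Theorems~\ref{the:unitary-other}, \ref{the:unitary-self}, \ref{the:meas-other} and~\ref{the:meas-self} to assert that no step can decrease any local spectrum. You supply the bookkeeping (the protocol tree, the convexity lemma for expected spectra, and the induction with transitivity) that the paper's one-sentence proof leaves implicit, and these details check out.
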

\begin{proof}
By Section~\ref{sec:classical}, Theorems~\ref{the:unitary-other}, \ref{the:unitary-self},
\ref{the:meas-other} and~\ref{the:meas-self} no step in the protocol can decrease any 
local spectrum in the majorization order.
\end{proof}

\subsection{Derivation of a generalization of one-half of Nielsen's result} \label{sec:Nielsen}
We can now derive a generalization of one half (the only if part) of Nielsen's
Theorem 1 in~\cite{Nielsen:LOCC}.

\begin{corollary} \label{the:onlyif-Nielsen}
If there is an $n$-party protocol consisting of local unitary operations, 
local generalized measurements and classical communication that, starting in a 
mixed global state $\sigma$ terminates {\em for sure}, i.e., with probability one, in mixed 
global state $\sigma'$, then $\sigma'$ is stronger than $\sigma$ in the sense of
Definition~\ref{def:global-major}.
\end{corollary}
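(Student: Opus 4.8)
The plan is to read the statement off directly from Theorem~\ref{the:increase}, which already does all the work: it asserts that in any LOCC protocol the expected final local spectrum of every agent majorizes that agent's initial local spectrum. The only things left to do are to translate the hypothesis ``terminates for sure'' into a statement about that expected final spectrum and to unwind the definition of ``stronger''.

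First I would fix an agent $i$ and set up notation for the full protocol tree. A general protocol built from local unitaries, classical communication and local generalized measurements branches according to the sequence of measurement outcomes; let $k$ index the leaves of this tree, let $p_{k}$ be the probability of reaching leaf $k$, and let $\sigma'_{k}$ be the global state there, so that $\sum_{k} p_{k} = 1$. Theorem~\ref{the:increase} then gives, for every $i$,
\[
\sum_{k} p_{k} \, Sp ( \rho_{i}^{\sigma'_{k}} ) \ \succeq \ Sp ( \rho_{i}^{\sigma} ).
\]

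Next I would invoke the hypothesis. Saying that the protocol terminates with probability one in the single global state $\sigma'$ means precisely that $\sigma'_{k} = \sigma'$ for every leaf $k$ with $p_{k} > 0$. Consequently the left-hand side collapses:
\[
\sum_{k} p_{k} \, Sp ( \rho_{i}^{\sigma'_{k}} ) \ = \
\Bigl( \sum_{k} p_{k} \Bigr) \, Sp ( \rho_{i}^{\sigma'} ) \ = \
Sp ( \rho_{i}^{\sigma'} ),
\]
using $\sum_{k} p_{k} = 1$. Combining the two displays yields $Sp ( \rho_{i}^{\sigma'} ) \succeq Sp ( \rho_{i}^{\sigma} )$, i.e.\ $Tr_{\bH_{i^{-}}} ( \sigma' ) \succeq Tr_{\bH_{i^{-}}} ( \sigma )$, for every $i$. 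By Definition~\ref{def:global-major} this is exactly the assertion that $\sigma'$ is stronger than $\sigma$.

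There is no real obstacle here; the content is entirely carried by Theorem~\ref{the:increase}. The one point that deserves care is the bookkeeping behind ``terminates for sure'': I must be sure that the expected-spectrum inequality of Theorem~\ref{the:increase} genuinely applies across the whole branching protocol, and not merely to a single measurement step, so that summing the point-mass distribution over all leaves is legitimate. Once that is granted, the collapse of a convex combination of identical spectra to a single spectrum is immediate, and the corollary follows at once.
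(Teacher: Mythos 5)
Your proposal is correct and follows essentially the same route as the paper: the paper's own proof likewise reduces the corollary to the step-by-step expected-spectrum majorization (Theorem~\ref{the:increase}), observes that termination for sure makes the expected final spectrum equal to $Sp(\rho_{i}^{\sigma'})$, and concludes $\rho_{i}^{\sigma'} \succeq \rho_{i}^{\sigma}$ for every $i$. Your explicit bookkeeping over the leaves of the protocol tree merely spells out what the paper leaves implicit.
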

\begin{proof}
At each step of the protocol, we have shown that, for any agent, the initial mixed local state is 
majorized by the expected spectrum of the final mixed local state. If the final global state
is, for sure, $\sigma'$, the final mixed local states are $\rho_{i}^{\sigma'}$ and the expected
spectra are $Sp ( \rho_{i}^{\sigma'} )$. We conclude that, for every $i$, \mbox{$1 \leq i \leq n$}
one has: \mbox{$\rho_{i}^{\sigma'} \succeq \rho_{i}^{\sigma}$}, proving our claim.
\end{proof}

One may note that our results do not use Schmidt's decomposition, which is used heavily
in~\cite{Nielsen:LOCC}.

\subsection{Generalization to an arbitrary probability $p$} \label{sec:p}
We generalize the {\em only if} part of Nielsen's result (Corollary~\ref{the:onlyif-Nielsen}) to
the case the LOCC protocol ends in state $\sigma'$ only with a probability $p$ that may be 
less than unity.

First, we define a relation of approximate majorization that generalizes Definitions~\ref{def:major}
and~\ref{def:global-major}.
\begin{definition} \label{def:approximate_major}
Let $c$ be a real number. Let \mbox{$ \lambda =$}
\mbox{$( \lambda_{1} , \ldots , \lambda_{n} )$}, and \mbox{$ \mu =$}
\mbox{$( \mu_{1} , \ldots , \mu_{n} ) $} be spectra (suitably padded).
We say that $\lambda$ {\em $c$-majorizes} $\mu$ and write \mbox{$\lambda\succeq_{c} \mu$}
iff for every \mbox{$k = 1 , \ldots , n$}, one has:
\begin{equation} \label{eq:approximate_major}
\sum_{j = 1}^{k} \lambda_{j} \geq \sum_{j = 1}^{k} \mu_{j} - 1 + c.
\end{equation} 
Let \mbox{$\rho : A \rightarrow A$} and \mbox{$\sigma : B \rightarrow B$} 
be any two mixed states, we say
that $\rho$ {\em $c$-majorizes} $\sigma$ and write \mbox{$\rho \succeq_{c} \sigma$}
iff \mbox{$Sp(\rho) \succeq_{c} Sp(\sigma)$}.
If $\rho$ and $\sigma$ are global mixed states, as in Definition~\ref{def:global-major}, we say
that $\rho$ is $c$-stronger than $\sigma$ iff, 
for every \mbox{$i = 1 , \ldots , n$}, one has:
\begin{equation} \label{eq:approximate_purer}
Tr_{\bH_{i^{-}}} ( \rho ) \succeq_{c} Tr_{\bH_{i^{-}}} ( \sigma ).
\end{equation}
\end{definition}

The relation $\succeq_{c}$ is not, in general, transitive but it obviously 
satisfies the following.
\begin{theorem} \label{the:succ-c}
\begin{enumerate}
\item For any real numbers $c$ and $d$, such that \mbox{$c > d$}, 
for any $\sigma$ and $\tau$, if \mbox{$\sigma \succeq_{c} \tau$}, then
\mbox{$\sigma \succeq_{d} \tau$},
\item for any \mbox{$c > 1$}, for no $\sigma$ and $\tau$ do we have
\mbox{$\sigma \succeq_{c} \tau$},
\item the relation $\succeq_{1}$ is the majorization relation $\succeq$,
\item for any \mbox{$c \leq 0$}, for any $\sigma$, $\tau$ we have
\mbox{$\sigma \succeq_{c} \tau$}.
\end{enumerate}
\end{theorem}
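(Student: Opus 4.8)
The plan is to obtain all four items directly from Definition~\ref{def:approximate_major}. Throughout I write $\lambda = Sp(\sigma)$ and $\mu = Sp(\tau)$ for the (padded) spectra, so that $\sigma \succeq_{c} \tau$ means precisely that inequality~(\ref{eq:approximate_major}) holds for every $k = 1, \ldots, n$. I rely only on two elementary facts about the spectrum of a mixed state: its eigenvalues are nonnegative and sum to $1$. Consequently every partial sum $\sum_{j=1}^{k} \lambda_{j}$ lies in $[0,1]$, and the full sum at $k = n$ equals $1$; I would record these facts first, since items~2 and~4 depend on them.

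Items~1 and~3 I would dispatch by inspecting the right-hand side of~(\ref{eq:approximate_major}) alone. For item~1, note that $\sum_{j=1}^{k} \mu_{j} - 1 + c$ is nondecreasing in $c$; hence if the bound $\sum_{j=1}^{k} \lambda_{j} \geq \sum_{j=1}^{k} \mu_{j} - 1 + c$ holds for every $k$, then for any $d < c$ the weaker bound with $c$ replaced by $d$ holds a fortiori, giving $\sigma \succeq_{d} \tau$. For item~3 I would simply set $c = 1$: the constant $-1 + c$ vanishes and~(\ref{eq:approximate_major}) collapses to $\sum_{j=1}^{k} \lambda_{j} \geq \sum_{j=1}^{k} \mu_{j}$, which is exactly~(\ref{eq:major}), so $\succeq_{1}$ coincides with the majorization relation $\succeq$.

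Items~2 and~4 are where the trace-$1$ normalization enters. For item~2 I would evaluate~(\ref{eq:approximate_major}) at $k = n$: since $\lambda$ and $\mu$ both sum to $1$, the inequality there reads $1 \geq 1 - 1 + c$, that is $c \leq 1$, so $\sigma \succeq_{c} \tau$ can hold for no pair whenever $c > 1$. For item~4, when $c \leq 0$ the right-hand side of~(\ref{eq:approximate_major}) is at most $\sum_{j=1}^{k} \mu_{j} - 1 \leq 0$, while its left-hand side $\sum_{j=1}^{k} \lambda_{j}$ is nonnegative; the inequality therefore holds automatically for every $k$ and every pair.

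I do not expect any genuine obstacle here — the theorem is itself flagged as ``obviously'' holding — and the only point meriting a moment's care is to be explicit about the division of labor: items~2 and~4 genuinely invoke the normalization (the full-length sum equalling $1$, and partial sums bounded by $1$, respectively), whereas items~1 and~3 are purely formal consequences of the definition on arbitrary padded tuples.
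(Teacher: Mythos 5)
Your proof is correct. The paper offers no proof of this theorem at all --- it is stated as something the relation ``obviously satisfies'' --- and your verification of the four items directly from Definition~\ref{def:approximate_major}, using only that spectra are nonnegative and sum to $1$ (the normalization being needed precisely for items~2 and~4, evaluated at $k=n$ and bounding the partial sums by $1$ respectively), is exactly the intended argument.
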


\begin{theorem}
If there is an $n$-party protocol consisting of local unitary operations, 
local generalized measurements and classical communication that, starting in a 
mixed global state $\sigma$ terminates {\em with probability at least} $p$, in mixed 
global state $\sigma'$, then $\sigma'$ is $p$-stronger than $\sigma$.
\end{theorem}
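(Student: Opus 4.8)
The plan is to treat the entire protocol as a single LOCC procedure and apply Theorem~\ref{the:increase} to its full distribution of final global states, then do a little bookkeeping to absorb the branches that do not end in $\sigma'$. Fix an agent $i$. The protocol induces a probability distribution over final global states; group together all branches terminating in $\sigma'$ so that $\sigma'$ is reached with total probability $p' \ge p$, while the remaining final states $\tau_k$ are reached with probabilities $q_k$ summing to $1 - p'$. By Theorem~\ref{the:increase}, agent $i$'s expected final local spectrum majorizes his initial one:
\[
p' \, Sp(\rho_{i}^{\sigma'}) + \sum_k q_k \, Sp(\rho_{i}^{\tau_k}) \ \succeq\ Sp(\rho_{i}^{\sigma}).
\]

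Next I would unwind this majorization at each prefix length $m$ and discard the unwanted branches by a crude bound. Writing $\lambda = Sp(\rho_{i}^{\sigma'})$ and $\nu = Sp(\rho_{i}^{\sigma})$, Definition~\ref{def:major} gives, for every $m$,
\[
p' \sum_{j=1}^{m}\lambda_{j} + \sum_k q_k \sum_{j=1}^{m} Sp(\rho_{i}^{\tau_k})_{j} \ \ge\ \sum_{j=1}^{m}\nu_{j}.
\]
Since each $Sp(\rho_{i}^{\tau_k})$ is the spectrum of a trace-one weakly positive operator, its entries are nonnegative and sum to $1$, so every prefix sum is at most $1$; hence $\sum_k q_k \sum_{j=1}^{m} Sp(\rho_{i}^{\tau_k})_{j} \le \sum_k q_k = 1 - p'$. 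Substituting gives $p'\sum_{j=1}^{m}\lambda_{j} \ge \sum_{j=1}^{m}\nu_{j} - (1 - p')$.

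Finally I would strip the factor $p'$ and weaken $p'$ to $p$. Because $0 < p' \le 1$ and the $\lambda_{j}$ are nonnegative, $\sum_{j=1}^{m}\lambda_{j} \ge p'\sum_{j=1}^{m}\lambda_{j}$, and because $p' \ge p$ we have $-(1 - p') \ge -(1 - p)$; chaining these yields
\[
\sum_{j=1}^{m}\lambda_{j} \ \ge\ \sum_{j=1}^{m}\nu_{j} - 1 + p,
\]
which is precisely $\rho_{i}^{\sigma'} \succeq_{p} \rho_{i}^{\sigma}$ by Definition~\ref{def:approximate_major}. As $i$ was arbitrary, $\sigma'$ is $p$-stronger than $\sigma$. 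I do not anticipate any real obstacle: the one point needing care is the accounting for the leftover branches, i.e., verifying that their combined contribution to each prefix sum is controlled solely by their total probability $1 - p'$ rather than by the unknown spectra $Sp(\rho_{i}^{\tau_k})$ themselves. Everything else is linearity of the expected spectrum together with the two elementary facts that prefix sums of a spectrum lie in $[0,1]$ and that replacing $p'$ by the smaller $p$ only relaxes the inequality.
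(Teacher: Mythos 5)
Your proof is correct and takes essentially the same route as the paper's: apply Theorem~\ref{the:increase} to the full distribution of final global states, then control the prefix sums contributed by the branches not ending in $\sigma'$ by their total probability alone, and finally relax the attained probability to $p$. The only cosmetic difference is that the paper lumps the leftover branches into a single conditional expected spectrum $S$ with weight $1-q$ before doing the same bookkeeping, whereas you keep the sum over individual branches $\tau_k$ explicit.
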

\begin{proof}
Suppose \mbox{$q \geq p$} is the probability with which state $\sigma'$ is attained.
By Theorem~\ref{the:increase} we have
\mbox{$q \, Sp(Tr_{i^{-}}(\sigma')) + (1 - q) \, S \succeq$} \mbox{$Sp(Tr_{i^{-}}(\sigma))$}, where 
$S$ is the spectrum expected if the protocol does not attain $\sigma'$.
But \mbox{$q \leq 1$} and \mbox{$1 - q \leq 1 - p$} and we have:
\mbox{$Sp(Tr_{i^{-}}(\sigma')) + (1 - p) \, S \succeq$} \mbox{$Sp(Tr_{i^{-}})(\sigma)$}.
Denote the eigenvalues of $Tr_{i^{-}}(\sigma)$ and $Tr_{i^{-}}(\sigma')$ by $\lambda_{i}$ and
$\lambda'_{i}$ respectively and by $\mu_{i}$ those of $S$.
For any \mbox{$k \leq n$}, we have:
\mbox{$\sum_{i = 1}^{k} \lambda'_{i} + (1 - p) \, \sum_{i = 1}^{k} \mu_{i} \geq$}
\mbox{$\sum_{i = 1}^{k} \lambda_{i}$} and therefore
\mbox{$\sum_{i = 1}^{k} \lambda'_{i} + 1 - p \geq$}
\mbox{$\sum_{i = 1}^{k} \lambda_{i}$}.
\end{proof}
 
\section{Negative results} \label{sec:negative}
In this section, we describe a number of ways certain generalizations 
of {\em if} part of Nielsen's result fail.

\subsection{Mixed states} \label{sec:partial-mixed}
The converse of Corollary~\ref{the:onlyif-Nielsen} does not hold. 
In fact, it fails quite dramatically.
Consider any mixed global state $\sigma$ and the mixed global state 
\mbox{$\sigma' =$}
\mbox{$\rho_{1}^{\sigma} \otimes \ldots \otimes \rho_{n}^{\sigma}$}.
It is clear that, for every $i$, \mbox{$1 \leq i \leq n$}, 
\mbox{$\rho_{i}^{\sigma'} =$} \mbox{$\rho_{i}^{\sigma} $} 
and therefore $\sigma'$ and $\sigma$ are equivalently strong. 
But, there is a protocol transforming, for sure, 
$\sigma'$ into $\sigma$ only if $\sigma$ is itself a product state 
since product states, such as $\sigma'$, 
are transformed into product states by local operations. 
In general, therefore, there is no protocol transforming $\sigma'$ into $\sigma$.

Any LOCC protocol transforms a pure state, 
or a distribution over such states, into a distribution
over pure states.
One possible generalization of Nielsen's result may be to ask whether, 
given two pure global states $s$ and $s'$ such that $s'$ is
stronger than $s$ there is always a transformation of $s$ into $s'$, for sure, 
by local operations and classical communication. 
Nielsen has shown that, for $2$-entanglement, this is the case.
Theorem~\ref{the:3qubits-negative} will show, 
by studying systems of three qubits,
that, for $n$-entanglement with \mbox{$n > 2$}, this is not the case.

Another angle of attack may be to try and generalize Nielsen's result, 
for \mbox{$n = 2$} to mixed states. 
A serious problem is already apparent when one studies equivalence 
of mixed global states under local unitary operations for $2$-entanglement. 
Let $\sigma$, $\tau$ be mixed states of $A \otimes B$.
If there are unitary local operations \mbox{$u_{A} : A \rightarrow A$} and
\mbox{$u_{B} : B \rightarrow B$} such that 
\mbox{$\tau =$} 
\mbox{$ (u_{A} \otimes u_{B} ) \circ \sigma \circ (u_{A}^{\ast} \otimes u_{B}^{\ast} )$}
then, necessarily, the states $\sigma$ and $\tau$ have the same spectrum,
the local states \mbox{$Tr_{B} ( \sigma )$} and \mbox{$Tr_{B} ( \tau )$} 
have the same spectrum and so do \mbox{$Tr_{A} ( \sigma )$} 
and \mbox{$Tr_{A} ( \tau )$}. 
Theorem~\ref{the:no-mixed-2} will show that, even for two qubits, 
there are such states $\sigma$ and $\tau$ that are not equivalent 
under local unitary operations.
 
 \subsection{Three qubits} \label{sec:3qu}
To every unit vector $x$ of $A$ one associates its projection $p_{x}$,  denoted
\mbox{$\mid x \rangle \langle x \mid$} in Dirac's notation, which is a mixed state of $A$.
A qubit \bQ\ is a two-dimensional Hilbert space on the complex field.
Let  \mbox{$\bH = \bQ_{1} \otimes \bQ_{2} \otimes \bQ_{3}$} be the tensor product of three qubits. 
Given any unit vector \mbox{$x \in \bH$} one defines mixed states on each of the qubits by:
\begin{equation} \label{eq:mixed}
\rho_{i}^{x} \ = \ Tr_{\bH_{i^{-}}} ( p_{x} )
\end{equation}
for \mbox{$i = 1 , 2 , 3$}, where \mbox{$\bH_{1^{-}} =$} \mbox{$\bH_{2} \otimes \bH_{3}$},
\mbox{$\bH_{2^{-}} =$} \mbox{$\bH_{1} \otimes \bH_{3}$} and 
\mbox{$\bH_{3^{-}} =$} \mbox{$\bH_{1} \otimes \bH_{2}$}.
Let \mbox{$\mid 0_{i} \rangle$} and \mbox{$\mid 1_{i} \rangle$} 
be a basis for $\bQ_{i}$,
for \mbox{$i = 1 , 2 , 3$}. 
When the sub-index is obvious from the context we shall not mention it 
and we shall abuse notations. 
For example \mbox{$\mid 0 1 0 \rangle$} denotes
the product state of $\bH$: 
\mbox{$\mid 0_{1} \rangle \otimes \mid 1_{2} \rangle \otimes
\mid 0_{3} \rangle$}.
In the sequel, indices $i$ and $j$ will range over the qubits:
\mbox{$i , j \in \{ 1 , 2 , 3 \}$} and $k$, $l$ and $m$ range over the indices 
of the base vectors: \mbox{$k , l , m \in \{ 0 , 1 \}$}.
In formulas using summation over those indices 
we shall dispense with specifying the bounds.
 
\subsection{The question} \label{sec:question}
The main technical question we ask and answer is the following: 
given any three mixed qubit states:
\mbox{$\sigma_{i} : \bQ_{i} \rightarrow \bQ_{i}$}, 
is there some {\em pure} state $x$ 
of \bH\ such that \mbox{$\rho_{i}^{x} =$} \mbox{$\sigma_{i}$} for every $i$? 
Our answer to the question is not complete, but we shall learn enough to
demonstrate that $n$-party entanglement 
for \mbox{$n > 2$} has properties very different from $2$-party entanglement.

\subsection{The equations} \label{sec:equations}
Without loss of generality, we let, for any $i$, the eigenvalues of $\sigma_{i}$ be
\mbox{$\lambda_{i}^{0} \geq \lambda_{i}^{1} \geq 0$} 
such that \mbox{$\sum_{k} \lambda_{i}^{k} =$} $1$,
we assume, w.l.o.g., that \mbox{$\lambda_{1}^{0} \geq \lambda_{2}^{0} \geq \lambda_{3}^{0}$} 
and we let $\mid 0_{i} \rangle$
be an eigenvector of $\sigma_{i}$ for the eigenvalue $\lambda_{i}^{0}$ 
and $\mid 1_{i} \rangle$
be an eigenvector of $\sigma_{i}$ for the eigenvalue $\lambda_{i}^{1}$.

Let \mbox{$x = \sum_{k , l , m} x_{k l m} \mid k l m \rangle$} be any vector of \bH.
We are now going to write down equations (in the complex coefficients $x_{k l m}$) 
to characterize those pure states $x$ such that \mbox{$\rho_{i}^{x} =$} 
\mbox{$\sigma_{i}$}, for any $i$.

A first equation requires $x$ to be a unit vector:
\begin{equation} \label{eq:unit}
1 \ = \ || x || \ = \ \sum_{k , l , m} x_{k l m} x_{k l m}^{\ast}.
\end{equation}

Our next equations express that \mbox{$\mid 0_{i} \rangle$} is an eigenvector
of $\sigma_{i}$ for eigenvalue $\lambda_{i}^{0}$.
We have:
\begin{equation} \label{eq:zero-eigen1}
\lambda_{i}^{0} \ = \ \langle 0_{i} \mid \sigma_{i} \mid 0_{i} \rangle \ = \ 
\sum_{k , l} {x}_{\tilde{i}(0 k l)} {x}_{\tilde{i}(0 k l)}^{\ast}
\end{equation}
where \mbox{$\tilde{1}(k l m) =$} \mbox{$k l m$}, 
\mbox{$\tilde{2}(k l m) =$} \mbox{$m k l$}, 
\mbox{$\tilde{3}(k l m) =$} \mbox{$l m k$} and
\begin{equation} \label{eq:zero-eigen2}
0 \ = \ \langle 1_{i} \mid \sigma_{i} \mid 0_{i} \rangle \ = \ 
\sum_{k , l} {x}_{\tilde{i}(1 k l)} {x}_{\tilde{i}(0 k l)}^{\ast}.
\end{equation}
Our last equations express  that \mbox{$\mid 1_{i} \rangle$} is an eigenvector
of $\rho_{i}$ for eigenvalue $\lambda_{i}^{1}$.
\begin{equation} \label{eq:one-eigen1}
\lambda_{i}^{1} \ = \ \langle 1_{i} \mid \sigma_{i} \mid 1_{i} \rangle \ = \ 
\sum_{k , l} {x}_{\tilde{i}(1 k l)} {x}_{\tilde{i}(1 k l)}^{\ast}
\end{equation}
and
\begin{equation} \label{eq:one-eigen2}
0 \ = \ \langle 0_{i} \mid \sigma_{i} \mid 1_{i} \rangle \ = \ 
\sum_{k , l} {x}_{\tilde{i}(0 k l)} {x}_{\tilde{i}(1 k l)}^{\ast}.
\end{equation}

One notices that, for every $i$, since \mbox{$\sum_{k} \lambda_{i}^{k} =$} $1$, one can obtain
Equation~(\ref{eq:one-eigen1}) by subtracting Equation~(\ref{eq:zero-eigen1}) from
Equation~(\ref{eq:unit}) and that Equation~(\ref{eq:one-eigen2}) is implied, by transposition, 
by Equation~(\ref{eq:zero-eigen2}) .
Therefore we conclude:
\begin{theorem} \label{the:char1}
The 7 Equations~(\ref{eq:unit}), (\ref{eq:zero-eigen1}) and (\ref{eq:zero-eigen2}) 
characterize those vectors $x$ for which  \mbox{$\rho_{i}^{x} =$} \mbox{$\sigma_{i}$} 
for every $i$.
\end{theorem}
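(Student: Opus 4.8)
My plan is to prove the operator equality $\rho_{i}^{x} = \sigma_{i}$ on each qubit $\bQ_{i}$ by comparing matrix entries in the orthonormal eigenbasis $\{\mid 0_{i}\rangle, \mid 1_{i}\rangle\}$ of $\sigma_{i}$, and then to strip the resulting scalar identities down to a minimal independent subset. Since $\bQ_{i}$ is two-dimensional and both $\rho_{i}^{x} = Tr_{\bH_{i^{-}}}(p_{x})$ (self-adjoint, being a partial trace of the self-adjoint projector $p_{x}$, by Equation~(\ref{eq:trace*})) and $\sigma_{i}$ are self-adjoint operators on it, the equation $\rho_{i}^{x} = \sigma_{i}$ holds if and only if the four entries $\langle k_{i}\mid\rho_{i}^{x}\mid l_{i}\rangle$ and $\langle k_{i}\mid\sigma_{i}\mid l_{i}\rangle$ agree for $k,l\in\{0,1\}$. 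Because $\mid 0_{i}\rangle$ and $\mid 1_{i}\rangle$ diagonalize $\sigma_{i}$ with eigenvalues $\lambda_{i}^{0}, \lambda_{i}^{1}$, the target entries are $\lambda_{i}^{0}, \lambda_{i}^{1}$ on the diagonal and $0$ off it, while the entries of $\rho_{i}^{x}$ are precisely the sums already evaluated via the basis formula~(\ref{eq:basis}) in Equations~(\ref{eq:zero-eigen1}), (\ref{eq:zero-eigen2}), (\ref{eq:one-eigen1}) and~(\ref{eq:one-eigen2}). Hence ``$\rho_{i}^{x} = \sigma_{i}$ for every $i$'' is equivalent to the conjunction of these four families (one instance per $i$), and in particular forces $\|x\|=1$ through the trace, so that Equation~(\ref{eq:unit}) is also implied. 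The only substantive work is therefore to discard the redundant equations and land on the seven claimed.

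For the first reduction I would fix $i$ and observe that the map $\tilde{i}$ permutes the three index slots so as to bring qubit $i$'s index to the front; consequently, as $(k,l)$ ranges over $\{0,1\}^{2}$, the multi-indices $\tilde{i}(0kl)$ (qubit $i$ fixed at $0$) and $\tilde{i}(1kl)$ (qubit $i$ fixed at $1$) together enumerate all of $\{0,1\}^{3}$ bijectively. Thus, for that $i$, the full normalization sum $\sum_{k,l,m} x_{klm} x_{klm}^{\ast}$ of Equation~(\ref{eq:unit}) splits into the sum of Equation~(\ref{eq:zero-eigen1}) plus the sum of Equation~(\ref{eq:one-eigen1}). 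Subtracting Equation~(\ref{eq:zero-eigen1}) from Equation~(\ref{eq:unit}) and using $\lambda_{i}^{0}+\lambda_{i}^{1}=1$ then yields exactly Equation~(\ref{eq:one-eigen1}), so all three instances of the latter are implied by~(\ref{eq:unit}) together with~(\ref{eq:zero-eigen1}).

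For the second reduction, Equation~(\ref{eq:one-eigen2}) is the complex conjugate of Equation~(\ref{eq:zero-eigen2}): conjugating $\sum_{k,l} x_{\tilde{i}(1kl)} x_{\tilde{i}(0kl)}^{\ast}$ produces $\sum_{k,l} x_{\tilde{i}(0kl)} x_{\tilde{i}(1kl)}^{\ast}$, and since $\bar{0}=0$ the vanishing of one off-diagonal entry forces the vanishing of the other. This is just a restatement of the self-adjointness of $\rho_{i}^{x}$, which is why a single off-diagonal equation suffices per agent. Combining the two reductions leaves Equation~(\ref{eq:unit}), the three instances of Equation~(\ref{eq:zero-eigen1}), and the three instances of Equation~(\ref{eq:zero-eigen2})---seven equations equivalent to the full system---and this gives the characterization. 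I expect no genuine obstacle: the argument is bookkeeping, and the one point demanding care is verifying the index bijection for each of $\tilde{1}, \tilde{2}, \tilde{3}$, since it is exactly this splitting of the normalization sum into the two diagonal sums that legitimizes the subtraction separately for each of the three agents.
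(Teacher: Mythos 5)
Your proposal is correct and follows essentially the same route as the paper: the paper likewise observes that Equation~(\ref{eq:one-eigen1}) follows by subtracting Equation~(\ref{eq:zero-eigen1}) from Equation~(\ref{eq:unit}) using $\sum_{k}\lambda_{i}^{k}=1$, and that Equation~(\ref{eq:one-eigen2}) follows from Equation~(\ref{eq:zero-eigen2}) by transposition (conjugation). You merely make explicit two details the paper leaves implicit, namely the reduction of the operator equality to matrix entries in the eigenbasis and the index bijection under $\tilde{i}$ that legitimizes splitting the normalization sum.
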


\subsection{Study of solutions} \label{sec:solutions}
We want to study solutions to those equations, but are far from a complete understanding.
We shall see in Theorem~\ref{the:nosol} that the equations above do not always, i.e., for any $\lambda_{i}$'s, 
have a solution, but we shall describe, in Theorem~\ref{the:existence}, a 3-dimensional 
domain (in the $\lambda$'s) in which solutions always exist and a sub-domain in which 
multiple solutions coexist.

Our first result is a full characterization of the solutions in the special case in which the
system is a tensor product of a first qubit and a sytem of two qubits, i.e., in the case
\mbox{$\lambda_{1}^{0} =$} $1$.
\begin{theorem} \label{the:nosol}
For \mbox{$\lambda_{1}^{0} =$} $1$, 
Equations~(\ref{eq:unit}), (\ref{eq:zero-eigen1}) and (\ref{eq:zero-eigen2}) 
have a solution iff \mbox{$\lambda_{2}^{0} =$} \mbox{$\lambda_{3}^{0}$},
equivalently \mbox{$\lambda_{1}^{0} + \lambda_{2}^{0} - \lambda_{3}^{0} \leq$} $1$.
\end{theorem}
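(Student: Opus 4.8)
The plan is to exploit the fact that the hypothesis $\lambda_1^0 = 1$ forces the first qubit into a \emph{pure} local state, which collapses the whole problem to a question about a single bipartite pure vector of $\bQ_2 \otimes \bQ_3$. Everything then hinges on the symmetry of the two reduced states of such a vector, for which I would invoke Theorem~\ref{the:sp=} rather than the Schmidt decomposition, in keeping with the rest of the paper.

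First I would establish necessity. By the reduction in Theorem~\ref{the:char1}, any realizing vector $x$ satisfies Equations~(\ref{eq:unit}) and~(\ref{eq:zero-eigen1}), and hence also the derived Equation~(\ref{eq:one-eigen1}). At $i = 1$, the assumption $\lambda_1^0 = 1$ gives $\lambda_1^1 = 0$, so~(\ref{eq:one-eigen1}) reads $\sum_{k,l} x_{1kl}\, x_{1kl}^{\ast} = 0$; being a sum of nonnegative terms, it forces $x_{1kl} = 0$ for all $k,l$, whence $x = \mid 0_1 \rangle \otimes y$ for a unit vector $y \in \bQ_2 \otimes \bQ_3$. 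Consequently $\rho_2^x$ and $\rho_3^x$ are precisely the two reduced states of the bipartite pure vector $y$. Writing $\beta : \bQ_3 \rightarrow \bQ_2$ for the operator whose matrix is the coefficient array of $y$, I would identify $\rho_2^x$ with $\beta \circ \beta^{\ast}$ and $\rho_3^x$ with $\beta^{\ast} \circ \beta$ up to a transposition that preserves the spectrum. Theorem~\ref{the:sp=} then yields $Sp(\rho_2^x) = Sp(\rho_3^x)$, i.e., $\lambda_2^0 = \lambda_3^0$. This is the substantive step; the only mildly delicate point is tracking the conjugate/transpose bookkeeping when matching the two reduced states to $\beta\beta^{\ast}$ and $\beta^{\ast}\beta$, and I expect this, rather than anything conceptual, to be the main (minor) obstacle.

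For sufficiency, assuming $\lambda_2^0 = \lambda_3^0$, I would exhibit an explicit witness: set
\[
x \;=\; \mid 0_1 \rangle \otimes \left( \sqrt{\lambda_2^0}\,\mid 0_2 0_3 \rangle + \sqrt{\lambda_2^1}\,\mid 1_2 1_3 \rangle \right),
\]
and verify directly, by computing the relevant partial traces, that $\rho_1^x = \sigma_1$, $\rho_2^x = \sigma_2$ and $\rho_3^x = \sigma_3$; the hypothesis $\lambda_2^0 = \lambda_3^0$ is exactly what makes $\rho_3^x$ equal $\sigma_3$ as an operator rather than merely share its spectrum. Finally, to settle the equivalence of the two stated formulations I would note that with $\lambda_1^0 = 1$ the inequality $\lambda_1^0 + \lambda_2^0 - \lambda_3^0 \leq 1$ reduces to $\lambda_2^0 \leq \lambda_3^0$, which together with the standing ordering $\lambda_2^0 \geq \lambda_3^0$ is equivalent to $\lambda_2^0 = \lambda_3^0$; this last point is pure arithmetic and poses no difficulty.
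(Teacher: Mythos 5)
Your proof is correct, and it takes a route that differs in one genuine respect from the paper's. The paper offers two proofs: the first reduces to $x = w \otimes y$ with $y \in \bQ_{2} \otimes \bQ_{3}$ and then invokes the Schmidt decomposition of $y$ to get equality of the two local spectra; the second works entirely with the coefficient equations, deriving $x_{1lm}=0$ and then analyzing the resulting $2\times 2$ matrix identity to conclude either $x_{000}=x_{011}=0$ or $\mid x_{010}\mid = \mid x_{001}\mid$, each case forcing $\lambda_{2}^{0}=\lambda_{3}^{0}$. You follow the outline of the first proof (reduce to a product, then compare the two reduced states of the bipartite factor) but replace the Schmidt-decomposition step with the identification $\rho_{2}^{x}\sim\beta\circ\beta^{\ast}$, $\rho_{3}^{x}\sim\beta^{\ast}\circ\beta$ and an appeal to Theorem~\ref{the:sp=}; your derivation of $x_{1lm}=0$ from Equation~(\ref{eq:one-eigen1}) matches the paper's second proof. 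Your substitution is sound (the transpose you flag is harmless since transposition preserves spectra) and is arguably more in the spirit of the paper, which explicitly advertises that its positive results avoid Schmidt's decomposition; what it gives up relative to the paper's second proof is the extra structural information about which coefficient patterns can occur, which that proof extracts along the way. Your sufficiency witness is the same explicit state as the paper's (Equation~(\ref{eq:boil})), and your handling of the equivalence with $\lambda_{1}^{0}+\lambda_{2}^{0}-\lambda_{3}^{0}\leq 1$ via the standing ordering $\lambda_{2}^{0}\geq\lambda_{3}^{0}$ is exactly what the statement intends.
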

\begin{proof}
Two proofs will be presented.
\begin{enumerate}
\item First proof: if \mbox{$\lambda_{1}^{0} =$} $1$, 
the global state $x$ is a tensor product \mbox{$w \otimes y$} where
\mbox{$w \in \bQ_{1}$} and \mbox{$y \in \bQ_{2} \otimes \bQ_{3}$}.
Schmidt's decomposition of $y$ shows that the spectra of the mixed local states
defined by $y$ are equal, i.e., \mbox{$\lambda_{2}^{0} =$} \mbox{$\lambda_{3}^{0}$}.
Conversely, if \mbox{$\lambda_{2}^{0} =$} \mbox{$\lambda_{3}^{0}$} the eigenvectors of the
two mixed local states define the Schmidt's decomposition of a suitable $y$.
\item Second proof:
Suppose $x_{k l m}$ is a solution.
Since \mbox{$\lambda_{1}^{0} =$} $1$, Equations~(\ref{eq:unit}) and~(\ref{eq:zero-eigen1})
for \mbox{$i = 1$} imply that \mbox{$x_{1 l m} = 0$} for any \mbox{$l , m$}.
Equations~(\ref{eq:zero-eigen2}) for \mbox{$i = 2 , 3$} can now be written as:
\begin{equation} \label{eq:matrix}
\left[ \begin{array}{c} x_{0 1 0} \ \ x_{0 0 1}^{\ast} \\ x_{0 0 1} \ \ x_{0 1 0}^{\ast} \end{array} \right] 
\left[ \begin{array}{c} x_{0 0 0}^{\ast} \\  x_{0 1 1} \end{array} \right] \ = \ 0.
\end{equation}
We conclude that:
\begin{itemize}
\item either \mbox{$x_{0 0 0} =$} \mbox{$x_{0 1 1} =$} $0$, in which case 
Equations~(\ref{eq:zero-eigen1}) can be written
\[
1 \ = \ \lambda_{1}^{0} \ = \ x_{0 0 1} x_{0 0 1}^{\ast} + x_{0 1 0} x_{0 1 0}^{\ast} \ , \ 
\lambda_{2}^{0} \ = \ x_{0 0 1} x_{0 0 1}^{\ast} \ , \ \lambda_{3}^{0} \ = \ x_{0 1 0} x_{0 1 0}^{\ast}
\]
and we conclude that \mbox{$\lambda_{2}^{0} + \lambda_{3}^{0} = 1$} and therefore
\mbox{$\lambda_{2}^{0} =$} \mbox{$\lambda_{3}^{0} =$} \mbox{$ 1 / 2$},
\item or \mbox{$x_{0 1 0} x_{0 1 0}^{\ast} =$} \mbox{$x_{0 0 1} x_{0 0 1}^{\ast}$}, in which case 
Equations~(\ref{eq:zero-eigen1}) for \mbox{$i = 2 , 3$} now imply 
\mbox{$\lambda_{2}^{0} =$} \mbox{$\lambda_{3}^{0}$}.
\end{itemize}

For the {\em if} part, notice that if \mbox{$\lambda_{1}^{0} =$} $1$, 
\mbox{$\lambda_{2}^{0} =$} \mbox{$\lambda_{3}^{0}$},
\mbox{$x_{1 l m} =$} $0$ for every $l , m$ and \mbox{$x_{0 10} =$}
\mbox{$x_{0 0 1} =$} $0$, then the equations boil down to:
\begin{equation} \label{eq:boil}
1 = x_{0 0 0} x_{0 0 0}^{\ast} + x_{0 1 1} x_{0 1 1}^{\ast} \ , \ 
\lambda_{2}^{0} = x_{0 0 0} x_{0 0 0}^{\ast}
\end{equation}
which can be solved.
\end{enumerate}
\end{proof}

\begin{definition} \label{def:odd-even}
We shall say that an index $k l m$ is {\em odd} iff the number of ones is odd,
and that it is {\em even} iff the number of ones is even.
\end{definition}

Two families of solutions will be presented, each under a condition concerning the 
$\lambda$'s. Note that the second condition 
\mbox{$\lambda_{1}^{0} + \lambda_{2}^{0} + \lambda_{3}^{0} \leq 2$} implies 
the first condition  \mbox{$\lambda_{1}^{0} + \lambda_{2}^{0} - \lambda_{3}^{0} \leq 1$},
which holds on a larger domain of parameters.

\begin{theorem} \label{the:existence}
Solutions to Equations~(\ref{eq:unit}), (\ref{eq:zero-eigen1}) and~(\ref{eq:zero-eigen2}) for
\mbox{$i = 1 , 2 , 3$} are provided, 
for any phases \mbox{$\theta_{k l m} \in [ 0 , 2 \pi ]$}, by:
\begin{itemize} \item 
if \mbox{$\lambda_{1}^{0} + \lambda_{2}^{0} - \lambda_{3}^{0} \leq 1$}
\begin{equation} \label{eq:leq1}
y_{k l m} \ = \ 0 {\rm \ for \  every \ odd \ index\ } k l m 
\end{equation}
\begin{equation} \label{eq:leq2}
y_{000} \ = \ {e}^{\theta_{000}} \, \sqrt{2 (\lambda_{1}^{0} + 
\lambda_{2}^{0} + \lambda_{3}^{0} - 1)} \: / \: 2 
\end{equation}
\begin{equation} \label{eq:leq2bis}
y_{011} \ = \ {e}^{\theta_{011}} \, \sqrt{2 ( \lambda_{1}^{0} - 
\lambda_{2}^{0} - \lambda_{3}^{0} + 1)} \: / \: 2  
\end{equation}
\begin{equation} \label{eq:leq2ter}
y_{1 0 1} \ = \ {e}^{\theta_{101}} \, \sqrt{2 (- \lambda_{1}^{0} + 
\lambda_{2}^{0} - \lambda_{3}^{0} + 1)} \: / \: 2 
\end{equation}
\begin{equation} \label{eq:leq2quad}
y_{1 1 0} \ = \ {e}^{\theta_{110}} \, \sqrt{2 (- \lambda_{1}^{0} - 
\lambda_{2}^{0} + \lambda_{3}^{0} + 1)} \: / \: 2
\end{equation}
\item 
if \mbox{$\lambda_{1}^{0} + \lambda_{2}^{0} + \lambda_{3}^{0} \leq 2$}
\begin{equation} \label{eq:geq1}
z_{k l m} \ = \ 0 {\rm \ for \  every \ even \ index\ } k l m 
\end{equation}
\begin{equation} \label{eq:geq2}
z_{001} \ = \ {e}^{\theta_{001}} \, \sqrt{2 (\lambda_{1}^{0} + 
\lambda_{2}^{0} - \lambda_{3}^{0})} \: / \: 2 
\end{equation}
\begin{equation} \label{eq:geq2bis}
z_{010} \ = \ {e}^{\theta_{010}} \, \sqrt{2 ( \lambda_{1}^{0} - 
\lambda_{2}^{0} + \lambda_{3}^{0} )} \: / \: 2  
\end{equation}
\begin{equation} \label{eq:geq2ter}
z_{1 0 0} \ = \ {e}^{\theta_{100}} \, \sqrt{2 (- \lambda_{1}^{0} + 
\lambda_{2}^{0} + \lambda_{3}^{0})} \: / \: 2 
\end{equation}
\begin{equation} \label{eq:geq2quad}
z_{1 1 1} \ = \ {e}^{\theta_{111}} \, \sqrt{2 (- \lambda_{1}^{0} - 
\lambda_{2}^{0} - \lambda_{3}^{0} + 2)} \: / \: 2
\end{equation}

\end{itemize}
\end{theorem}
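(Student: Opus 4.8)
The statement is a verification: by Theorem~\ref{the:char1} it suffices to check that each of the two proposed families satisfies the seven Equations~(\ref{eq:unit}), (\ref{eq:zero-eigen1}) and~(\ref{eq:zero-eigen2}) for $i = 1, 2, 3$. The plan is to handle both families at once, since the first is supported on the even indices and the second on the odd indices (Definition~\ref{def:odd-even}), and the two computations are mirror images of each other. I would also note at the outset that all seven equations are insensitive to the phases $\theta_{klm}$: Equations~(\ref{eq:unit}) and~(\ref{eq:zero-eigen1}) involve only the moduli $|x_{klm}|^2$, and the off-diagonal Equations~(\ref{eq:zero-eigen2}) will be seen to vanish term by term.

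The one idea that is not pure bookkeeping is the observation that makes Equation~(\ref{eq:zero-eigen2}) automatic. Its sum $\sum_{k,l} x_{\tilde i(1kl)}\, x_{\tilde i(0kl)}^{\ast}$ pairs, in each term, two coefficients whose indices differ in exactly one bit (the one in position $i$ after the permutation $\tilde i$); hence one index is odd and the other even. Since the first family kills all odd-index coefficients and the second kills all even-index ones, every product contains a vanishing factor, so all three instances of Equation~(\ref{eq:zero-eigen2}) hold identically, for both families and for all phases.

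Next I would verify the diagonal Equations~(\ref{eq:zero-eigen1}). For fixed $i$, the sum $\sum_{k,l} |x_{\tilde i(0kl)}|^2$ retains exactly the two surviving coefficients whose bit in position $i$ is $0$; for the first family and $i = 1$ these are $y_{000}$ and $y_{011}$, and $|y_{000}|^2 + |y_{011}|^2 = \frac{1}{2}\left[ (\lambda_1^0 + \lambda_2^0 + \lambda_3^0 - 1) + (\lambda_1^0 - \lambda_2^0 - \lambda_3^0 + 1) \right] = \lambda_1^0$, the $\lambda_2^0$ and $\lambda_3^0$ terms cancelling. The cases $i = 2, 3$, and the whole second family, are identical two-term cancellations. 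Equation~(\ref{eq:unit}) then follows by summing all four squared moduli: the coefficients of $\lambda_1^0, \lambda_2^0, \lambda_3^0$ each cancel and the four constants sum to $2$, giving $2/2 = 1$. Equations~(\ref{eq:one-eigen1}) and~(\ref{eq:one-eigen2}) need no separate treatment, as Theorem~\ref{the:char1} has already reduced the problem to the seven listed equations.

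The substantive step, and the only place where the two stated hypotheses and the conventions of Section~\ref{sec:equations} enter, is checking that each radicand is nonnegative, so that the $y$'s and $z$'s are real. Here I would invoke $\lambda_1^0 \geq \lambda_2^0 \geq \lambda_3^0$, $\lambda_i^0 \leq 1$, and $\lambda_i^0 \geq 1/2$ (the last from $\lambda_i^0 \geq \lambda_i^1$ with $\lambda_i^0 + \lambda_i^1 = 1$). Three of the four radicands in each family are then nonnegative outright --- for example $\lambda_1^0 - \lambda_2^0 - \lambda_3^0 + 1 = (\lambda_1^0 - \lambda_2^0) + (1 - \lambda_3^0) \geq 0$ for the first family, and $-\lambda_1^0 + \lambda_2^0 + \lambda_3^0 \geq \lambda_2^0 + \lambda_3^0 - 1 \geq 0$ for the second --- while the single remaining radicand is exactly the hypothesis: $\lambda_1^0 + \lambda_2^0 - \lambda_3^0 \leq 1$ governs $y_{110}$, and $\lambda_1^0 + \lambda_2^0 + \lambda_3^0 \leq 2$ governs $z_{111}$. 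The main obstacle is thus not any single hard argument but the care needed to match each radicand to the inequality certifying its sign; once the parity observation trivializes Equation~(\ref{eq:zero-eigen2}), the rest is the telescoping arithmetic above.
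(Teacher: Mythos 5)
Your proposal is correct and follows essentially the same route as the paper: the paper's proof is a direct verification ``by inspection'' whose only highlighted idea is exactly your parity observation, that every term of Equation~(\ref{eq:zero-eigen2}) pairs an odd-index with an even-index coefficient, so the support conditions~(\ref{eq:leq1}) and~(\ref{eq:geq1}) make those equations hold automatically. You merely write out the telescoping arithmetic and the radicand sign checks that the paper leaves implicit (the latter appearing in the remark preceding its proof).
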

Notice that the quantities of which we take a square root are indeed non-negative:
since
\mbox{$1 \geq \lambda_{1}^{0} \geq \lambda_{2}^{0} \geq \lambda_{3}^{0} \geq 1 / 2$} 
(Equations~(\ref{eq:geq2}) , (\ref{eq:geq2bis}), (\ref{eq:geq2ter}), (\ref{eq:leq2}), (\ref{eq:leq2bis}) 
and~(\ref{eq:leq2ter})
and by the specific assumptions (Equations~(\ref{eq:geq2quad}) and~(\ref{eq:leq2quad})).
\begin{proof}
By inspection. Notice that, in Equations~(\ref{eq:zero-eigen2}), every term is the product
of a variable of odd index by a variable of even index and therefore each of 
Equations~(\ref{eq:geq1}) and~(\ref{eq:leq1}) guarantees that they are satisfied.
\end{proof}

Note that, if \mbox{$\lambda_{1}^{0} =$} $1$, as in Theorem~\ref{the:nosol}, the first
condition \mbox{$\lambda_{1}^{0} + \lambda_{2}^{0} + \lambda_{3}^{0} \leq$} $2$ is equivalent 
to \mbox{$\lambda_{2}^{0} =$} \mbox{$\lambda_{3}^{0} =$} $1 / 2$ and that, in this case, the 
solution $z$ of Theorem~\ref{the:existence} is different from the solution $x$ of 
Theorem~\ref{the:nosol}. Note also that, for  \mbox{$\lambda_{1}^{0} =$} $1$, the second
condition \mbox{$\lambda_{1}^{0} + \lambda_{2}^{0} - \lambda_{3}^{0} \leq$} $1$ is equivalent 
to \mbox{$\lambda_{2}^{0} =$} \mbox{$\lambda_{3}^{0}$} as noticed in Theorem~\ref{the:nosol} 
and that, in this case, the 
solution $y$ of Theorem~\ref{the:existence} is identical to the solution $x$ of 
Theorem~\ref{the:nosol}. 
At this point it is natural to ask whether the equations above have a solution iff 
\mbox{$\lambda_{1}^{0} + \lambda_{2}^{0} - \lambda_{3}^{0} \leq$} $1$. 
No answer is available at the moment.

\subsection{A property of equivalence under unitary operations} 
\label{sec:equivalence}
\begin{theorem} \label{the:equivalence}
Suppose $\sigma$ and $\tau$ are mixed states of $A \otimes B$ such that:
\begin{enumerate}
\item \mbox{$Tr_{B}( \tau ) =$} \mbox{$Tr_{B}( \sigma )$},
\mbox{$Tr_{A}( \tau ) =$} \mbox{$Tr_{A}( \sigma )$} and no eigenvalue
of any of those operators is degenerate 
\item and there are unitary maps \mbox{$u_{A} : A \rightarrow A$} and 
\mbox{$u_{B} : B \rightarrow B$} such that \mbox{$\tau =$} 
\mbox{$( u_{A} \otimes u_{B} ) \circ \sigma \circ ( u_{A}^{\ast} \otimes u_{B}^{\ast} )$}.
\end{enumerate}
Then, if $\sigma_{i , j}$ and $\tau_{i , j}$ are the elements 
of the matrices representing $\sigma$ and $\tau$ in the basis whose elements are
the tensor products of eigenbases for the traces on $A$ and $B$, then, for any
$i , j$ one has: \mbox{$\mid \tau_{i , j} \mid =$} \mbox{$\mid \sigma_{i , j} \mid$}.
\end{theorem}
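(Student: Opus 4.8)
The plan is to show that hypotheses~1 and~2 together force the local unitaries $u_{A}$ and $u_{B}$ to be \emph{diagonal}, acting merely by phases, in the eigenbases of the two local states; once this is established, the conclusion \mbox{$\mid \tau_{i,j} \mid = \mid \sigma_{i,j} \mid$} follows from a one-line computation of matrix elements.

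First I would name the common local states: write \mbox{$\rho_{A} = Tr_{B}(\sigma) = Tr_{B}(\tau)$} and \mbox{$\rho_{B} = Tr_{A}(\sigma) = Tr_{A}(\tau)$}. By hypothesis~1 neither $\rho_{A}$ nor $\rho_{B}$ has a degenerate eigenvalue, so each possesses an orthonormal eigenbasis, \mbox{$\{ \mid a_{i} \rangle \}$} for $A$ and \mbox{$\{ \mid b_{j} \rangle \}$} for $B$, that is unique up to phases. Their tensor products \mbox{$\{ \mid a_{i} \rangle \otimes \mid b_{j} \rangle \}$} are exactly the basis in which the matrices of the statement are written.

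Next I would extract commutation relations. Applying Theorem~\ref{the:aux} to \mbox{$\tau = (u_{A} \otimes u_{B}) \circ \sigma \circ (u_{A}^{\ast} \otimes u_{B}^{\ast})$} gives \mbox{$Tr_{B}(\tau) = u_{A} \circ Tr_{B}(\sigma) \circ u_{A}^{\ast}$}, that is \mbox{$u_{A} \circ \rho_{A} \circ u_{A}^{\ast} = \rho_{A}$}, so $u_{A}$ commutes with $\rho_{A}$; the same theorem with the two factors interchanged shows that $u_{B}$ commutes with $\rho_{B}$. This is the crux, and the only place non-degeneracy is used: since each eigenspace of $\rho_{A}$ is one-dimensional, any unitary commuting with $\rho_{A}$ must send each $\mid a_{i} \rangle$ to a scalar multiple of itself, and unitarity makes that scalar a phase, so \mbox{$u_{A} \mid a_{i} \rangle = {e}^{i \alpha_{i}} \mid a_{i} \rangle$}, and likewise \mbox{$u_{B} \mid b_{j} \rangle = {e}^{i \beta_{j}} \mid b_{j} \rangle$}. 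Were degeneracy permitted, $u_{A}$ could mix vectors within an eigenspace and the conclusion would fail, so the hypothesis is indispensable here.

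Finally the computation. The operator \mbox{$U = u_{A} \otimes u_{B}$} is then diagonal in the product basis, with \mbox{$U ( \mid a_{i} \rangle \otimes \mid b_{j} \rangle ) = {e}^{i ( \alpha_{i} + \beta_{j} )} \, \mid a_{i} \rangle \otimes \mid b_{j} \rangle$}. Writing $e_{p}$ for the product basis vectors and ${e}^{i \gamma_{p}}$ for their phases, \mbox{$\tau_{p,q} = \langle e_{p} \mid U \circ \sigma \circ U^{\ast} \mid e_{q} \rangle = {e}^{i ( \gamma_{p} - \gamma_{q} )} \, \sigma_{p,q}$}, whence \mbox{$\mid \tau_{p,q} \mid = \mid \sigma_{p,q} \mid$} for all $p,q$. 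The only substantive ingredient is the reduction to phase-diagonal unitaries; everything after it is bookkeeping.
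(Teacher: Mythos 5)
Your proposal is correct and follows essentially the same route as the paper: both apply Theorem~\ref{the:aux} to deduce that $u_{A}$ commutes with $Tr_{B}(\sigma)$ (and $u_{B}$ with $Tr_{A}(\sigma)$), use non-degeneracy to conclude each is a phase-diagonal matrix in the corresponding eigenbasis, and then read off $\mid \tau_{i,j} \mid = \mid \sigma_{i,j} \mid$ from the diagonal form of $u_{A} \otimes u_{B}$. Your write-up merely makes explicit the final matrix-element computation that the paper compresses into ``the conclusion of the theorem follows.''
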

\begin{proof}
Assume all assumptions of the theorem are satisfied.
By Theorem~\ref{the:aux}, we have:
\[
Tr_{B} ( \sigma ) \ = \ Tr_{B} ( \tau ) \ = \ 
Tr_{B} ( (u_{A} \otimes u_{B} ) \circ \sigma \circ ( u_{A}^{\ast} \otimes u_{B}^{\ast}) ) 
\ = \ 
u_{A} \circ Tr_{B} ( \sigma ) \circ u_{A}^{\ast}.
\]
Therefore we have: 
\mbox{$Tr_{B} ( \sigma ) \circ u_{A} \ = \ u_{A} \circ Tr_{B} ( \sigma )$}.
Let $x$ be an eigenvector of $Tr_{B}( \sigma )$ for eigenvalue $\lambda$.
We have:
\[
\lambda \, u_{A} ( x ) \ = \ u_{A} ( \lambda \, x ) \ = \ u_{A} ( Tr_{B} ( \sigma ) ( x ))
\ = \ Tr_{B} ( \sigma ) ( u_{A} ( x )).
\]
We see that $u_{A}(x)$ is an eigenvector of $Tr_{B}( \sigma )$ 
for eigenvalue $\lambda$. Since $\lambda$ is a non-degenerate eigenvalue of
$Tr_{B} ( \sigma )$, $u_{A} ( x )$ is colinear with $x$ and
\mbox{$u_{A} ( x) =$} \mbox{$e^{i \varphi} x$} for some 
\mbox{$\varphi \in [0 , 2 \pi [$}.
We see that, in a basis of eigenvectors of $Tr_{B} ( \sigma )$, the unitary operation
$u_{A}$ is represented by a diagonal matrix 
(whose diagonal entries all have modulus $1$).
Similarly for $u_{B}$ in a basis of eigenvectors of $Tr_{A} ( \sigma )$.
We conclude that, in the basis whose elements are the tensor products of the bases
of $A$ and $B$ just considered, the unitary $u_{A} \otimes u_{B}$ is represented
by a diagonal matrix (whose diagonal entries have modulus $1$).
The conclusion of the theorem follows.
\end{proof}

\subsection{Mixed state equivalence for $2$ qubits} \label{sec:mixed-equiv}
We can now give an example of how the generalization of the {\em if} part 
of Nielsen's result to mixed states fail, even for $2$-entanglement, as announced
in Section~\ref{sec:partial-mixed}.
\begin{theorem} \label{the:no-mixed-2}
Assume \mbox{$\lambda_{1}^{0} + \lambda_{2}^{0} + \lambda_{3}^{0} \leq 2$} and
\mbox{$\lambda_{3}^{0} > 1 / 2$}.
Let $y$ and $z$ be the solutions of Theorem~\ref{the:existence}, 
which both exist under the assumptions.
Let $\sigma$ be the two-qubits mixed state defined by 
\mbox{$\sigma =$} \mbox{$Tr_{\bQ_{3}} ( p_{y})$} and let
\mbox{$\tau =$} \mbox{$Tr_{\bQ_{3}} (p_{z})$}.
The spectra of $\sigma$ and $\tau$ are equal, their partial traces are the same:
\mbox{$Tr_{\bQ_{2}} ( \sigma ) =$} \mbox{$Tr_{\bQ_{2}} ( \tau )$} and
\mbox{$Tr_{\bQ_{1}} ( \sigma ) =$} \mbox{$Tr_{\bQ_{1}} ( \tau )$}, but
$\sigma$ and $\tau$ are not equivalent under local unitary transformations.
\end{theorem}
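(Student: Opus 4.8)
The plan is to verify the three stated equalities directly from the construction of $y$ and $z$, and then to obtain non-equivalence by contraposing Theorem~\ref{the:equivalence}. First I would record that both solutions exist under the hypotheses: $\lambda_{1}^{0} + \lambda_{2}^{0} + \lambda_{3}^{0} \leq 2$ gives $z$ directly, and since $\lambda_{3}^{0} > 1/2$ we get $\lambda_{1}^{0} + \lambda_{2}^{0} - \lambda_{3}^{0} = (\lambda_{1}^{0} + \lambda_{2}^{0} + \lambda_{3}^{0}) - 2 \lambda_{3}^{0} < 2 - 1 = 1$, which gives $y$. Equality of the partial traces is then immediate from the tower property Equation~(\ref{eq:three}) together with the defining property of the solutions: $Tr_{\bQ_{2}}(\sigma) = Tr_{\bQ_{2} \otimes \bQ_{3}}(p_{y}) = \rho_{1}^{y} = \sigma_{1} = \rho_{1}^{z} = Tr_{\bQ_{2}}(\tau)$, and symmetrically $Tr_{\bQ_{1}}(\sigma) = \sigma_{2} = Tr_{\bQ_{1}}(\tau)$. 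For the spectra, since $p_{y}$ is a rank-one state on $(\bQ_{1} \otimes \bQ_{2}) \otimes \bQ_{3}$, the nonzero part of $Sp(\sigma)$ coincides with the nonzero part of $Sp(Tr_{\bQ_{1} \otimes \bQ_{2}}(p_{y})) = Sp(\sigma_{3})$, the complementary-subsystem identity for pure states, which is an instance of Theorem~\ref{the:sp=}. As $\sigma$ has rank at most $\dim \bQ_{3} = 2$, this forces $Sp(\sigma) = (\lambda_{3}^{0}, \lambda_{3}^{1}, 0, 0) = Sp(\tau)$.

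The crux is non-equivalence, and here I would apply Theorem~\ref{the:equivalence} in contrapositive form. Its non-degeneracy hypothesis is exactly where $\lambda_{3}^{0} > 1/2$ is needed: the chain $\lambda_{1}^{0} \geq \lambda_{2}^{0} \geq \lambda_{3}^{0} > 1/2$ forces each $\lambda_{i}^{1} = 1 - \lambda_{i}^{0} < 1/2 < \lambda_{i}^{0}$, so both partial traces $\sigma_{1} = Tr_{\bQ_{2}}(\sigma)$ and $\sigma_{2} = Tr_{\bQ_{1}}(\sigma)$ have distinct eigenvalues. Choosing $\mid 0_{i} \rangle, \mid 1_{i} \rangle$ to be the eigenvectors of $\sigma_{i}$, both $\sigma$ and $\tau$ are expressed in the \emph{same} tensor-product eigenbasis $\{ \mid 0_{1} 0_{2} \rangle, \mid 0_{1} 1_{2} \rangle, \mid 1_{1} 0_{2} \rangle, \mid 1_{1} 1_{2} \rangle \}$ required by Theorem~\ref{the:equivalence}. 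If $\sigma$ and $\tau$ were locally unitarily equivalent, the theorem would force $\mid \sigma_{pq} \mid = \mid \tau_{pq} \mid$ for every entry in this basis, so it suffices to exhibit one entry where this fails.

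To do so I would compute the $4 \times 4$ matrices from $\sigma_{(kl)(k'l')} = \sum_{m} y_{klm} y_{k'l'm}^{\ast}$ and the analogous formula for $\tau$ with $z$. Because $y$ is supported on even indices and $z$ on odd indices, both matrices acquire the identical support pattern: a diagonal together with the two couplings $00 \leftrightarrow 11$ and $01 \leftrightarrow 10$. The discriminating entry is the simplest one, the $(00),(00)$ diagonal, which equals $\mid y_{000} \mid^{2} = (\lambda_{1}^{0} + \lambda_{2}^{0} + \lambda_{3}^{0} - 1)/2$ for $\sigma$ and $\mid z_{001} \mid^{2} = (\lambda_{1}^{0} + \lambda_{2}^{0} - \lambda_{3}^{0})/2$ for $\tau$, reading off the explicit amplitudes of Theorem~\ref{the:existence}. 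Their difference is $\lambda_{3}^{0} - 1/2 > 0$, so these real nonnegative diagonal entries have unequal moduli. This single inequality contradicts the conclusion of Theorem~\ref{the:equivalence}, whence $\sigma$ and $\tau$ are not equivalent under local unitary transformations.

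The main obstacle is bookkeeping rather than a genuine difficulty: one must assemble the partial-trace matrices carefully in the correct eigenvector basis and track which amplitudes survive the sum over $m$. The one subtlety worth flagging is that the eigenbases are determined only up to phases, but this is harmless precisely because Theorem~\ref{the:equivalence} compares \emph{moduli} of matrix entries, which are phase-independent; moreover I have selected a real nonnegative diagonal entry as the witness, so no phase ambiguity enters the final comparison at all.
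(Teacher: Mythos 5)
Your proposal is correct and follows essentially the same route as the paper: equality of partial traces via Equation~(\ref{eq:three}), equality of spectra via the pure-state complementary-subsystem identity, and non-equivalence by contraposing Theorem~\ref{the:equivalence} using the very same witness entry $\sigma_{(00),(00)}=(\lambda_{1}^{0}+\lambda_{2}^{0}+\lambda_{3}^{0}-1)/2$ versus $\tau_{(00),(00)}=(\lambda_{1}^{0}+\lambda_{2}^{0}-\lambda_{3}^{0})/2$. You are in fact slightly more explicit than the paper on two points it leaves implicit, namely why both solutions exist under the stated hypotheses and why the non-degeneracy hypothesis of Theorem~\ref{the:equivalence} holds.
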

\begin{proof}
By properties of $2$-entanglement, 
the spectrum of $\sigma$ is equal (with suitable padding with zeros) 
to the spectrum of \mbox{$Tr_{\bQ_{1} \otimes \bQ_{2}} (p_{y})$}.
The spectrum of $\tau$ is equal 
to the spectrum of \mbox{$Tr_{\bQ_{1} \otimes \bQ_{2}} (p_{z})$} and
those traces are equal by construction.
By Equation~\ref{eq:three}, \mbox{$Tr_{\bQ_{2}} ( \sigma ) =$}
\mbox{$Tr_{\bQ_{2} \otimes \bQ_{3}} ( p_{y} ) =$}
\mbox{$Tr_{\bQ_{2} \otimes \bQ_{3}} ( p_{z} ) =$}
\mbox{$Tr_{\bQ_{2}} ( \tau )$} and similarly for 
\mbox{$Tr_{\bQ_{1}} ( \sigma )$}.
We are left to prove that $\sigma$ and $\tau$ are not equivalent under local unitary
transformations. We shall use the contrapositive of Theorem~\ref{the:equivalence}.
The conclusion of the theorem does not hold since, for example, we have
\mbox{$\sigma_{1 , 1} =$} 
\mbox{$y_{000} y_{000}^{\ast} + y_{001} y_{001}^{\ast} =$} 
\mbox{$( \lambda_{1}^{0} + \lambda_{2}^{0} + \lambda_{3}^{0} - 1 ) \: / \: 2$}
and \mbox{$\tau_{1 , 1} =$} 
\mbox{$z_{000} z_{000}^{\ast} + z_{001} z_{001}^{\ast} =$} 
\mbox{$( \lambda_{1}^{0} + \lambda_{2}^{0} - \lambda_{3}^{0} ) \: / \: 2$}.
Both are positive real numbers and they are different since 
\mbox{$\lambda_{3}^{0} > 1 / 2$}. But the first assumption holds:
note no eigenvalue is degenerate since  \mbox{$\lambda_{3}^{0} > 1 / 2$} implies
\mbox{$\lambda_{i}^{0} > \lambda_{i}^{1}$} for every $i$. 
Since $x$ and $y$ are solutions of the equations we know that 
\mbox{$\rho_{i}^{y} =$} $\rho_{i}^{x}$ (for every $i$).
We conclude that the second hypothesis does not hold.
\end{proof}

\subsection{Pure state equivalence under local unitary operations} \label{sec:equiv-unitary}
We are now interested in studying whether any two solutions of 
Equations~(\ref{eq:unit})
to (\ref{eq:zero-eigen2}) above
are equivalent under local unitary operations.

Two solutions that differ only by the phase factors $\theta$ are equivalent
but, at least in the generic situation described in Theorem~\ref{the:no-mixed-2}, 
the solutions $y$ and $z$ are not equivalent.

For $2$-party entanglement, pure global states $x$ and $y$ of $A \otimes B$ are equivalent
under local unitary operations iff the mixed states \mbox{$\rho_{A}^{x} =$} 
\mbox{$Tr_{B}( p_{x} )$} and \mbox{$\rho_{A}^{y} =$} 
\mbox{$Tr_{B}( p_{y} )$} have the same spectrum ($p_{z}$ is the projection on $z$).
This result cannot be generalized to $3$-party entanglement.
\begin{theorem} \label{the:3qubits-negative}
Assume \mbox{$\lambda_{1}^{0} + \lambda_{2}^{0} + \lambda_{3}^{0} \leq 2$} and
\mbox{$\lambda_{3}^{0} > 1 / 2$}.
Let $y$ and $z$ be the solutions of Theorem~\ref{the:existence}, 
which both exist under the assumptions.
They define the same spectra on each of the $\bQ_{i}$ for \mbox{$i = 1 , 2 , 3$} 
but they are not
equivalent under local unitary transformations. 
\end{theorem}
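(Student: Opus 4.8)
The plan is to treat the two assertions separately and to deduce the inequivalence from Theorem~\ref{the:no-mixed-2} by tracing out the third qubit. The statement about spectra is immediate: both $y$ and $z$ are solutions of the characterizing equations of Theorem~\ref{the:char1}, so $\rho_{i}^{y} = \sigma_{i} = \rho_{i}^{z}$ for every $i$. Thus the local mixed states induced by $y$ and by $z$ on each $\bQ_{i}$ are not merely isospectral, they actually coincide, and a fortiori have the same spectra.

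For the inequivalence I would argue by contradiction. Suppose $y$ and $z$ were equivalent under local unitary operations. Since pure-state equivalence is understood up to a global phase, this is most cleanly phrased at the level of the projections: there are unitaries $u_{1}, u_{2}, u_{3}$ with $p_{y} = (u_{1} \otimes u_{2} \otimes u_{3}) \circ p_{z} \circ (u_{1}^{\ast} \otimes u_{2}^{\ast} \otimes u_{3}^{\ast})$. Now trace out $\bQ_{3}$, grouping $\bQ_{1} \otimes \bQ_{2}$ as the $A$-factor and $\bQ_{3}$ as the $B$-factor of Theorem~\ref{the:aux}, with $u_{A} = u_{1} \otimes u_{2}$ and $u_{B} = u_{3}$. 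Theorem~\ref{the:aux} then yields
\[
\sigma \ = \ Tr_{\bQ_{3}}(p_{y}) \ = \ (u_{1} \otimes u_{2}) \circ Tr_{\bQ_{3}}(p_{z}) \circ (u_{1}^{\ast} \otimes u_{2}^{\ast}) \ = \ (u_{1} \otimes u_{2}) \circ \tau \circ (u_{1}^{\ast} \otimes u_{2}^{\ast}),
\]
where $\sigma$ and $\tau$ are exactly the two-qubit mixed states of Theorem~\ref{the:no-mixed-2}. This exhibits $\sigma$ and $\tau$ as equivalent under the local unitaries $u_{1}, u_{2}$ of $A \otimes B$, directly contradicting the conclusion of Theorem~\ref{the:no-mixed-2} that no such local unitaries exist. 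Hence $y$ and $z$ cannot be equivalent under local unitary operations.

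The substantive obstacle has in fact already been overcome in Theorem~\ref{the:no-mixed-2}, whose proof, via the contrapositive of Theorem~\ref{the:equivalence}, produces a modulus $\mid \sigma_{1,1} \mid \neq \mid \tau_{1,1} \mid$ distinguishing the two reduced states; the nondegeneracy required there is supplied by $\lambda_{3}^{0} > 1/2$, and under the standing hypothesis $\lambda_{1}^{0} + \lambda_{2}^{0} + \lambda_{3}^{0} \leq 2$ both solutions $y$ and $z$ exist, since that inequality forces $\lambda_{1}^{0} + \lambda_{2}^{0} - \lambda_{3}^{0} \leq 1$ as well. The only genuinely new point is the stability of local-unitary equivalence under partial trace: a global equivalence $u_{1} \otimes u_{2} \otimes u_{3}$ descends to a local equivalence $u_{1} \otimes u_{2}$ of the reduced two-qubit states. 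The one subtlety I would watch is the global-phase convention for pure states, which is why I would work with the projections $p_{y}, p_{z}$ throughout rather than with the state vectors.
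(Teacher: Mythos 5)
Your proof is correct and follows essentially the same route as the paper, whose entire proof is the single line ``By Theorem~\ref{the:no-mixed-2}.'' You have simply made explicit the reduction the paper leaves implicit, namely that a local-unitary equivalence $u_{1}\otimes u_{2}\otimes u_{3}$ of the pure states would, via Theorem~\ref{the:aux} applied to the partial trace over $\bQ_{3}$, descend to a local-unitary equivalence of the two-qubit reduced states $\sigma$ and $\tau$, which Theorem~\ref{the:no-mixed-2} forbids.
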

\begin{proof}
By Theorem~\ref{the:no-mixed-2}.
\end{proof}

The question whether pure states $y$ and $z$ as above can be obtained 
from each other with probability $1$ by LOCC operations is open.

\section*{Acknowledgements}
Dorit Aharonov convinced me that entanglement was fundamental and pointed
me to the right places. Discussions with her and with Michael Ben-Or 
are gratefully acknowledged.

\bibliographystyle{plain}

\appendix
\section{A theorem of Y. Fan} \label{sec:Fan}
\begin{definition} \label{def:k-basis}
Let $\bH$ be an $n$-dimensional Hilbert space and let $k$ be a natural number 
\mbox{$1 \leq k \leq n$}. A set \mbox{$x_{1} , \ldots , x_{k}$} of vectors of $\bH$ is said to
be a $k$-basis iff all the vectors $x_{i}$ are unit vectors and pairwise orthogonal.
\end{definition}
The following is a slightly modified version of a result (Theorem 1) of Y. Fan~\cite{Fan:49}.
The proof presented here is Fan's.
\begin{theorem}[Y. Fan , 1949] \label{the:Fan}
Let $\bH$ be an $n$-dimensional Hilbert space and let $q$ be a natural number 
\mbox{$1 \leq q \leq n$}. Assume \mbox{$A : \bH \rightarrow \bH$} is a self-adjoint linear operator
and that \mbox{$Sp(A) =$} \mbox{$\lambda_{1} , \ldots , \lambda_{n}$} with
\mbox{$\lambda_{i} \geq \lambda_{i + 1}$} for any $i$, \mbox{$1 \leq i < n $}.
For any $q$-basis \mbox{$b =$}
\mbox{$\{ x_{i} \}_{i = 1 , \ldots , q}$}, define the real nonnegative quantity 
\mbox{$w_{A}(b) =$}
\mbox{$\sum_{i = 1}^{q} \langle x_{i} \mid A \mid x_{i} \rangle$}.
The quantity $w_{A}(b)$ is maximal (among all $q$-bases) if and only if $b$ is composed of 
eigenvectors of $A$ for each of the eigenvalues \mbox{$\lambda_{1} , \ldots , \lambda_{q}$}.
The maximal value of $w_{A}(b)$ is the sum of the $q$ largest eigenvalues of $A$: 
\mbox{$\sum_{i = 1}^{q} \lambda_{i}$}. 
\end{theorem}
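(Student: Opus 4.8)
The plan is to reduce the whole statement to a linear optimization over a vector of weights and then read off both the maximal value and the characterization. First I would invoke the spectral theorem for the self-adjoint operator $A$ to fix an orthonormal basis $e_1, \ldots, e_n$ of $\bH$ with $A e_j = \lambda_j e_j$ and $\lambda_1 \geq \cdots \geq \lambda_n$. Given any $q$-basis $b = \{x_i\}_{i=1}^{q}$, I expand $x_i = \sum_j c_{ij} e_j$ with $c_{ij} = \langle e_j \mid x_i \rangle$, so that $\langle x_i \mid A \mid x_i \rangle = \sum_j \lambda_j |c_{ij}|^2$ and hence
\[
w_A(b) \ = \ \sum_{j=1}^{n} \lambda_j \, p_j, \qquad p_j \ = \ \sum_{i=1}^{q} |c_{ij}|^2 \ = \ \sum_{i=1}^{q} |\langle e_j \mid x_i \rangle|^2 .
\]
The entire content of the theorem becomes a statement about the vector $(p_1, \ldots, p_n)$.

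Next I would record the two constraints on the weights. Since each $x_i$ is a unit vector, $\sum_j |c_{ij}|^2 = 1$, and summing over $i$ gives $\sum_j p_j = q$. Since the $x_i$ are pairwise orthonormal, Bessel's inequality applied to the fixed vector $e_j$ yields $p_j = \sum_{i=1}^{q} |\langle e_j \mid x_i \rangle|^2 \leq \| e_j \|^2 = 1$; together with $p_j \geq 0$ this confines $(p_j)$ to the polytope $\{\, 0 \leq p_j \leq 1,\ \sum_j p_j = q \,\}$. Maximizing the linear functional $\sum_j \lambda_j p_j$ with non-increasing coefficients over this polytope is elementary: an Abel-summation (rearrangement) argument shows the maximum is attained by loading the first $q$ coordinates, giving $w_A(b) \leq \sum_{i=1}^{q} \lambda_i$. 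This already yields the maximal value and the inequality $w_A(b) \leq \sum_{i=1}^{q}\lambda_i$ used later in Corollary~\ref{co:sum}. The \emph{if} direction is then immediate: if $b$ consists of eigenvectors for $\lambda_1, \ldots, \lambda_q$ then $p_j = 1$ for $j \leq q$ and $w_A(b) = \sum_{i=1}^{q}\lambda_i$.

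For the \emph{only if} direction I would analyze the equality case by setting the cutoff $c = \lambda_q$ and writing $w_A(b) = \sum_j (\lambda_j - c)\,p_j + c\,q$. Since the coefficient $\lambda_j - c$ is strictly positive when $\lambda_j > c$ and strictly negative when $\lambda_j < c$, maximality forces $p_j = 1$ whenever $\lambda_j > \lambda_q$ and $p_j = 0$ whenever $\lambda_j < \lambda_q$. Each equality $p_j = 1$ is precisely the equality case of Bessel's inequality, which forces $e_j \in \mbox{span}(b)$, while each $p_j = 0$ forces $e_j \perp \mbox{span}(b)$. As $\mbox{span}(b)$ has dimension $q$, it must therefore be the direct sum of all eigenspaces with eigenvalue strictly above $\lambda_q$ together with a subspace of the eigenspace at level $\lambda_q$ making up the dimension; that is, $\mbox{span}(b)$ is an $A$-invariant subspace on which $A$ has spectrum $\lambda_1, \ldots, \lambda_q$, which is the asserted characterization.

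The main obstacle is the equality analysis at a \emph{degenerate} cutoff, i.e.\ when $\lambda_q = \lambda_{q+1}$. There the optimal face of the polytope is not a single vertex, so the weights $p_j$ at the cutoff level are not individually pinned down, and one can only conclude that $\mbox{span}(b)$ is the correct invariant subspace rather than that each $x_i$ is separately an eigenvector --- indeed, already when the top $q$ eigenvalues are distinct, an arbitrary orthonormal basis of the sum of the top eigenspaces need not consist of eigenvectors. The clean and correct formulation of the characterization is thus in terms of $\mbox{span}(b)$, and I would state it that way; in any case the value statement $\sum_{i=1}^{q}\lambda_i$, which is all that Corollary~\ref{co:sum} actually invokes, is unaffected by this subtlety.
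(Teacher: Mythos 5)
Your proof is correct and follows essentially the same route as the paper's: expand $w_A(b)$ in an eigenbasis of $A$, bound the weights $p_j = \sum_{i=1}^{q} \mid \langle e_j \mid x_i \rangle \mid^2$ by Bessel's inequality, and shift by the cutoff $\lambda_q$ to obtain both the upper bound $\sum_{j=1}^{q}\lambda_j$ and the equality analysis. Your observation that the equality case is properly a statement about the span of $b$ (an $A$-invariant subspace carrying the top $q$ eigenvalues) rather than about each $x_i$ individually being an eigenvector is exactly the conclusion the paper's own proof reaches in its final sentence, so there is no discrepancy in substance, only in how the theorem is phrased.
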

\begin{proof}
Let \mbox{$\{ x_{i} \}_{i = 1 , \ldots , q}$} be any $q$-basis and let
\mbox{$\{ y_{j} \}_{j = 1 , \ldots , n}$}, be a basis for $\bH$ with $y_{j}$ an eigenvector of $A$ for
eigenvalue $\lambda_{j}$, for \mbox{$j = 1 , \ldots , n$}. We have:
\[
\langle x_{i} \mid A \mid x_{i} \ \rangle = \ 
\sum_{j = 1}^{n} \lambda_{j} \mid \langle y_{j} \mid x_{i} \rangle \mid^{2} \ = \
\]
\[
\sum_{j = 1}^{n}  ( \lambda_{j} - \lambda_{q} )
\mid \langle y_{j} \mid x_{i} \rangle \mid^{2}  + 
\lambda_{q} \sum_{j = 1}^{n} \mid \langle y_{j} \mid x_{i} \rangle \mid^{2} \ = \ 
\]
\[
\sum_{j = 1}^{n}  ( \lambda_{j} - \lambda_{q} )
\mid \langle y_{j} \mid x_{i} \rangle \mid^{2}  + 
\lambda_{q} \| x_{i} \|^{2} \ = \ 
\sum_{j = 1}^{n}  ( \lambda_{j} - \lambda_{q} )
\mid \langle y_{j} \mid x_{i} \rangle \mid^{2}  + \lambda_{q} 
\]
Therefore
\[
\sum_{i = 1}^{q} \langle x_{i} \mid A \mid x_{i} \ \rangle \ = \ 
\sum_{j = 1}^{n}  ( \lambda_{j} - \lambda_{q} ) \sum_{i = 1}^{q}  
\mid \langle y_{j} \mid x_{i} \rangle \mid^{2}  + q \lambda_{q}.
\]
Since \mbox{$\lambda_{j} - \lambda_{q} \leq 0$} for \mbox{$j \geq q$} and
\mbox{$\lambda_{j} - \lambda_{q} \geq 0$} for \mbox{$j \leq q$}
we have:
\[
\sum_{i = 1}^{q} \langle x_{i} \mid A \mid x_{i} \ \rangle \ \leq \ 
\sum_{j = 1}^{q}  ( \lambda_{j} - \lambda_{q} ) \| y_{j} \|^{2}  + q \lambda_{q} \ = \
\sum_{j = 1}^{q}  ( \lambda_{j} - \lambda_{q} ) + q \lambda_{q} \ = \ \sum_{j = 1}^{q}  \lambda_{j}
\]
and \mbox{$\sum_{j = 1}^{q}  \lambda_{j}$} is an upper bound for $w_{A}(b)$.
But we have \mbox{$\sum_{i = 1}^{q} \langle x_{i} \mid A \mid x_{i} \ \rangle =$}
\mbox{$\sum_{j = 1}^{q}  \lambda_{j}$}
if and only if, 
\begin{itemize}
\item for any $j$, \mbox{$q < j \leq n$} one has
\mbox{$( \lambda_{j} - \lambda_{q} ) 
\sum_{i = 1}^{q} \mid \langle y_{j} \mid x_{i} \rangle \mid^{2} =$} $0$, i.e., for any $j$
such that \mbox{$\lambda_{j} < \lambda_{q}$} and for any $i$, \mbox{$1 \leq i \leq q$}
the vectors $y_{j}$ and $x_{i}$ are orthogonal, and
\item for any $j$, \mbox{$1 \leq j \leq q$}, the vector $y_{j}$ is orthogonal to all vectors $x_{i}$ 
for \mbox{$q < i \leq n$}.
\end{itemize}
We conclude that $w_{A}(b)$ is equal to \mbox{$\sum_{j = 1}^{q}  \lambda_{j}$} iff, for every
$q$, \mbox{$1 \leq q \leq n$}, the subspace spanned by \mbox{$\{ x_{i} \}_{i = 1 , \ldots , q}$}
is the subspace spanned by \mbox{$\{ y_{i} \}_{i = 1 , \ldots , q}$}.
\end{proof}

\section{The state resulting of a measurement majorizes the initial state only in expectation}
\label{sec:expect-only}
Consider a global space \mbox{$\bH =$} \mbox{$\bH_{a} \otimes \bH_{b}$} where $\bH_{a}$, 
Alice's system, consists of two qubits (qubits $1$ and $2$) and Bob's system consists of one qubit
(qubit $3$). Let the global state be 
\[
h \ = \ \sqrt{p / 3} \, \mid 000 \rangle \, + \, \sqrt{p / 3} \, \mid 001 \rangle \, + \,  
\sqrt{p / 3} \, \mid 111 \rangle \, + \, 
\]
\[
\sqrt{( 1 - p ) / 2} \, \mid 010 \rangle \, + \, \sqrt{ (1 - p ) / 2} \, \mid 100 \rangle
\]
for some $p$, \mbox{$0 \leq p < 1$}.
The local mixed state for Bob is 
\[
\left( \begin{array}{cc}
1 - 2 p / 3 & p / 3 \\ p / 3 & 2 p / 3
\end{array} \right)
\]
whose eigenvalues are \mbox{$(1\pm \sqrt{1 - 4 / 3 \, p (2 - 5 / 3 \, p)} )  / 2$}.
As expected, if $p$ is small, the mixed state for Bob is almost the pure state $\mid 0 \rangle$,
and one of the eigenvalues is close to $1$, the other close to $0$.
If Alice's measures her local state by testing it on the orthogonal subspaces
spanned by \mbox{$\mid 00 \rangle$} and \mbox{$\mid 11 \rangle$} 
on one hand and by \mbox{$\mid 01 \rangle$} and \mbox{$\mid 10 \rangle$} 
on the other hand and if she gets the first subspace as an answer, 
then the global state of the system will be:
\[
h' \ = \ \sqrt{1 / 3} (\mid 000 \rangle \, + \, \mid 001 \rangle \, + \, \mid 111 \rangle )
\]
and Bob's local mixed state will be
\[
\left( \begin{array}{cc}
1 / 3 & 1 / 3 \\ 1 / 3 & 2 / 3
\end{array} \right)
\]
whose eigenvalues are \mbox{$(1 \pm \sqrt{ 5 } / 3 ) / 2$}.
If \mbox{$p < 0.2$} then 
\mbox{$ ( 1 + \sqrt{ 5} / 3 ) / 2$} is less than 
\mbox{$( 1 + \sqrt{1 - 4 / 3 \, p ( 2 - 5 / 3 \, p)} ) / 2$} and
the mixed state of Bob after Alice's measurement is strictly more mixed, 
i.e., less pure than it was before.

\section{Something probably well-known} \label{sec:well-known}
I guess the following is well-known but I miss a precise reference.
\begin{theorem} \label{the:sp=}
Let $A$ be a finite dimensional Hilbert space and \mbox{$f : A \rightarrow A$} a linear operator.
Then, \mbox{$Sp ({f}^{\ast} \circ f ) =$}
\mbox{$Sp ( f \circ {f}^{\ast} )$}.
\end{theorem}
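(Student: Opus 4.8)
The plan is to establish a bijection between the eigenspaces of $f^{\ast} \circ f$ and those of $f \circ f^{\ast}$ belonging to each \emph{nonzero} eigenvalue, and then to use a dimension count to force the multiplicity of the eigenvalue $0$ to agree as well. Both $f^{\ast} \circ f$ and $f \circ f^{\ast}$ are self-adjoint and weakly positive, hence diagonalizable with real nonnegative eigenvalues, and each has exactly $\dim(A)$ eigenvalues counted with multiplicity (padding with zeros as agreed in Section~\ref{sec:spectrum}).

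The core step treats a fixed eigenvalue $\lambda \neq 0$. Consider the map $v \mapsto f(v)$ sending the $\lambda$-eigenspace of $f^{\ast} \circ f$ into $A$. If $f^{\ast} \circ f (v) = \lambda v$, applying $f$ gives $f \circ f^{\ast} (f(v)) = \lambda \, f(v)$, so $f(v)$ lies in the $\lambda$-eigenspace of $f \circ f^{\ast}$; moreover $f(v) \neq 0$ whenever $v \neq 0$, since $f(v) = 0$ would force $\lambda v = f^{\ast} \circ f(v) = 0$. Symmetrically, $w \mapsto f^{\ast}(w)$ carries the $\lambda$-eigenspace of $f \circ f^{\ast}$ into that of $f^{\ast} \circ f$. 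Because $f^{\ast}(f(v)) = \lambda v$ and $f(f^{\ast}(w)) = \lambda w$, these two maps are mutually inverse up to the nonzero scalar $\lambda$, so each is a linear isomorphism between the respective $\lambda$-eigenspaces. Hence every nonzero eigenvalue occurs with the same multiplicity in $Sp(f^{\ast} \circ f)$ and in $Sp(f \circ f^{\ast})$.

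To finish I would invoke a dimension count: the nonzero eigenvalues contribute the same total multiplicity to both spectra, and each spectrum has total multiplicity $\dim(A)$, so the eigenvalue $0$ must appear with equal multiplicity in both. This yields $Sp(f^{\ast} \circ f) = Sp(f \circ f^{\ast})$ as multisets. The one delicate point is exactly this last step, and it is where the hypothesis $f : A \rightarrow A$ is essential: both products then act on spaces of the same dimension, so the matching of nonzero parts propagates to the zero part. (Equivalently, one could argue through a polar or singular-value decomposition of $f$, which exhibits $f \circ f^{\ast}$ as unitarily similar to $f^{\ast} \circ f$, but the elementary eigenspace argument above is self-contained and makes the role of the zero eigenvalue transparent.)
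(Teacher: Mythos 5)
Your proof is correct and follows essentially the same route as the paper: show that $f$ maps the $\lambda$-eigenspace of $f^{\ast}\circ f$ into that of $f\circ f^{\ast}$ for each $\lambda\neq 0$, match multiplicities, and let the equal dimensions of the two spaces handle the eigenvalue $0$. The only (harmless) difference is in how you certify the multiplicity claim --- you use that $v\mapsto f(v)$ and $w\mapsto f^{\ast}(w)$ are mutually inverse up to the nonzero scalar $\lambda$, hence isomorphisms, whereas the paper checks that $f$ sends orthogonal eigenvectors to orthogonal (hence independent) images.
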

\begin{proof}
Note, first, that both ${f}^{\ast} \circ f$ and $f \circ {f}^{\ast}$ are self-adjoint
and therefore have $dim(A)$ real eigenvalues.
We shall show that every eigenvalue $\lambda$ of ${f}^{\ast} \circ f$, 
different from zero,
is an eigenvalue of $f \circ {f}^{\ast}$ with the same multiplicity.
To this effect we note that if \mbox{$x \in A$} is an eigenvector of ${f}^{\ast} \circ f$ 
for some eigenvalue \mbox{$\lambda \neq 0$}, then \mbox{$f ( x )$} is an eigenvector of 
$f \circ {f}^{\ast}$ for eigenvalue $\lambda$.
Suppose indeed $x$ and $\lambda$ are as assumed, then
\mbox{${f}^{\ast} ( f (x )) =$} \mbox{$\lambda \, x \neq \vec{0}$} and therefore
\mbox{$f ( x ) \neq \vec{0}$}. 
But \mbox{$(f \circ {f}^{\ast}) (f (x ) ) =$}
\mbox{$f ( ( {f}^{\ast} \circ f ) ( x) =$}
\mbox{$f ( \lambda \, x ) =$}
\mbox{$\lambda \, f ( x )$}.
We are left to show that the multiplicity of $\lambda$ for $f \circ {f}^{\ast}$ is at least its
multiplicity for ${f}^{\ast} \circ f$.
For this, we note that if \mbox{$y \in A$} is orthogonal to $x$, 
then $f ( y )$ is orthogonal to $f ( x )$.
Indeed, \mbox{$\langle f ( y ) \mid f ( x ) \rangle =$}
\mbox{$\langle y \mid ( {f}^{\ast} \circ f ) ( x ) \rangle =$}
\mbox{$ \langle y \mid \lambda \, x \rangle =$} 
\mbox{$\lambda \, \langle y \mid x \rangle = 0$}.
\end{proof}

\end{document}